\documentclass[10pt]{article}
\usepackage{graphicx}
\usepackage{float}
\usepackage{placeins}

\usepackage[letterpaper,top=1in,bottom=1in,left=1in, right=0.5in,footskip=0.5in,marginparwidth=1in]{geometry}
\usepackage{cite}

\usepackage{nameref,hyperref}

\usepackage{microtype}
\DisableLigatures[f]{encoding = *, family = * }

\usepackage{changepage}

\usepackage[aboveskip=1pt,labelfont=bf,labelsep=period,singlelinecheck=off]{caption}

\makeatletter
\renewcommand{\@biblabel}[1]{\quad#1.}
\makeatother

\usepackage{lastpage,fancyhdr,graphicx}
\usepackage{epstopdf}
\fancyheadoffset[L]{2.25in}
\fancyfootoffset[L]{2.25in}

\usepackage{color}
\usepackage[ruled,vlined]{algorithm2e}
\usepackage{bbm}
\usepackage{listings}
\usepackage{amsthm}
\usepackage{placeins}
\usepackage{float}
\usepackage{enumitem}
\usepackage{booktabs}
\usepackage{array}
\usepackage[normalem]{ulem}
\usepackage{tikz}
\usepackage[table]{xcolor}
\usepackage{cancel}   
\usetikzlibrary{tikzmark,calc}
\usepackage{amssymb}
\usepackage{amsmath}
\usepackage{tabularx}

\usepackage{subfig}
\usepackage{changepage}

\usepackage{amsthm}
\newtheorem{definition}{Definition}
\newtheorem{proposition}{Proposition}

\newtheorem{theorem}{Theorem}
\newtheorem{remark}{Remark}
\newtheorem{example}{Example}
\newtheorem{assumption}{Assumption}

\definecolor{Gray}{gray}{.25}

\usepackage{graphicx}

\usepackage{sidecap}

\usepackage{wrapfig}
\usepackage[pscoord]{eso-pic}
\usepackage[fulladjust]{marginnote}
\reversemarginpar

\begin{document}
\vspace*{0.35in}

\begin{center}
{\Large
\textbf\newline{An Entropy-based Framework for Hybrid Coalitions in Game
Theory. Part I: Human Arbitration} }
\newline
\\
Salomé A. Sepúlveda-Fontaine and 
José M. Amigó \textsuperscript{*}
\\
\bigskip
Centro de Investigación Operativa, Universidad Miguel Hernández, 03202 Elche, Spain

\end{center}

\begin{flushleft}
* jm.amigo@umh.es
\end{flushleft}



\section*{Abstract}
Classical Game Theory underpins much of AI and multi-agent research, but hybrid Human–AI systems require a framework in which execution authority can alternate within a digital environment. We introduce Neo-Game Theory, an extension of classical Game Theory for hybrid Human–AI coalitions operating under Virtual Nature, the algorithmic analogue of classical (physical) Nature. The framework combines a lexicographic coalition utility with a delegation rule based on the Jensen--Shannon divergence between Human and AI policies. Two thresholds define agreement, contextual, and disagreement regions. In the contextual region, execution follows a scenario-specific rule. Apart from the theory, in this paper we develop the first regime, Human arbitration, in which the AI learns by observation and frequency matching while the Human retains final execution authority. We establish the axiomatic basis of the framework and characterize a frequency-convergence equilibrium, providing the foundation for later extensions and computational validation.



\section{Introduction}

Game Theory originated with von Neumann and Morgenstern’s
\emph{Theory of Games and Economic Behavior} \cite{Neumann1944TheTO}, which replaced physics-based analogies with a mathematical treatment of strategic interaction. It studies agents who anticipate one another’s responses and reason in terms of payoffs rather than chance.

Delegation has also long been central to economic theory, beginning with principal--agent models in information economics, where a principal transfers decision authority to an agent with distinct objectives or private information \cite{MachStadlerPerez1997}. In industrial organization, strategic delegation emerged prominently in the 1980s, showing how incentive contracts and delegated control can endogenously alter market equilibria in oligopoly settings \cite{FershtmanJudd1987,KopelPezzino2018}.

Classical Game Theory models strategic interaction among agents with fixed decision authority and standard preference structures \cite{Neumann1944TheTO}. Hybrid Human--AI systems differ in a crucial respect: execution authority may alternate endogenously between Human and AI components, so coalition behavior need not be representable by a single smooth preference ordering.
To formalize this setting, we introduce \emph{Neo-Game Theory}. It describes hybrid Human--AI coalitions operating under \emph{Virtual Nature}, the digital counterpart of classical Nature. 

In this paper, we develop the first regime: Human arbitration, in which the AI learns by observation and frequency matching while ultimate execution authority remains with the Human. To this end, we establish the axiomatic basis of the framework and characterize the associated frequency-convergence equilibrium. Related hybrid learning dynamics appear, for example, in the quantum--classical reinforcement-learning setting of \cite{Hamann2022HybridAgent}.

In sum, this paper aims to show why Classical Game Theory is insufficient for hybrid Human--AI coalitions, formulate \emph{Neo-Game Theory} under \emph{Virtual Nature}, and to define and simulate the first regime mentioned: Human arbitration.

\subsection{\label{sec:level2}Related work}

Existing work on hybrid Human--AI systems spans reinforcement learning, robust decision frameworks, and human-in-the-loop architectures \cite{Dellermann2021HybridTaxonomy,Wang2022HybridSurvey}. These approaches motivate the present setting, but they typically either blend Human and AI utilities or treat authority as fixed. In \emph{Neo-Game Theory}, the executing agent is selected at each decision step through a binary delegation rule. The framework therefore studies endogenous control selection under \emph{Virtual Nature}.

This distinguishes the framework from MARL, where utilities are typically aggregated; from Bayesian games, where Nature remains exogenous; and from standard Human--AI collaboration models, which usually combine Human and AI preferences rather than selecting a single executed action at each step. Related work on bounded rationality, knowledge limits, robust decision-making, and hybrid learning motivates the present setting \cite{Dekel1997RationalityAK,EdalatG2018,Rahimian_2022,Aghassi2006Robust,BenTal2009,Hamann2022HybridAgent}, but does not provide the same delegation-based equilibrium structure.

Neo-Game Theory formalizes a class of hybrid Human--AI interactions that existing models do not capture. It studies alternating execution authority, where delegation is determined by policy divergence and behaviour is characterized by empirical regime frequencies.

\subsection{\label{sec:organization} Organization of the paper}

Section~\ref{sec:ClGT} recalls the classical concepts used as the reference point for the hybrid extension. Section~\ref{sec:NeoGameT} introduces the foundations of Neo-Game Theory and \emph{Virtual Nature}. Section~\ref{geneal_formulation} presents the general formulation. Section~\ref{sec:FirstSC} develops Scenario~1, Human arbitration. Section~\ref{sec:Discussion} discusses the implications of the results. Section~\ref{sec:Conclusion} concludes. Furthermore, Appendix~\ref{Glossary} summarizes the notation used in this paper, and Appendix~\ref{pseudocode} reports the base-case pseudocode.


\section{\label{sec:ClGT}Classical Game Theory: a brief review}

The purpose of this section is identifying the assumptions that will be relaxed in Neo-Game Theory.

\subsection{\label{sec:Classical axioms} Classical axiomatic core} The fundamental axioms of classical Game Theory
\cite{Neumann1944TheTO,Bicchieri,towards2018} are the following.

\begin{enumerate}
\item \textbf{Completeness.}
A rational agent can compare any two alternatives $A,B$ and establish a preference or indifference:
$
A\succeq B \;\text{or}\; B\succeq A \;\text{or both }(A\sim B),
$
where $\succeq$ means “at least as good as’’ and $\sim$ denotes indifference.

\item \textbf{Transitivity.}
If $A\succeq B$ and $B\succeq C$, then $A\succeq C$.

\item \textbf{Independence.}
If $A\succeq B$, then for any $p\!\in\!(0,1)$:
$
pA+(1-p)C\succeq pB+(1-p)C.
$

\item \textbf{Continuity (Archimedean axiom).}
If $A\succeq B\succeq C$, there exists $p\!\in\!(0,1)$ such that
$
B\sim pA+(1-p)C,
$
ensuring continuous, rational trade-offs.
\end{enumerate}

These axioms ground the \emph{Expected Utility Theory}~\cite{Neumann1944TheTO},
which guarantees the existence of a utility function $U$ satisfying
\begin{equation}
\label{eq:vNM}  
U(L)=\sum_i p_iU(x_i)
\end{equation}
for any lottery $L=\{(x_i,p_i)\}$.

\subsection{\label{PoliciesDef}Classical Policies}

For reference, below we recall the standard policy definitions:
\vspace{0.2cm}

(i) 
Let $\mathcal{I}_i$ denote player $i$’s information sets and $A(I)$ the actions available at $I\!\in\!\mathcal{I}_i$.  
A \emph{deterministic (pure) policy} is a function
$\pi_i:\mathcal{I}_i\!\to\!\bigcup_{I\in\mathcal{I}_i}\!A(I)$ with
$\pi_i(I)\!\in\!A(I)$.  
Allowing randomization, a \emph{behavioral policy} maps
$\pi_i:\mathcal{I}_i\!\to\!\Delta(A(I))$.  
With perfect recall, behavioral and mixed strategies are outcome-equivalent.

\vspace{0.2cm}

(ii) For a Markov Decision Process \cite{Kuhn1953,OsborneRubinstein1994,Puterman1994} with states $S$, actions $A(s)$, and transition kernel
$P(s'\mid s,a)$, a stationary randomized policy is
$\pi:S\!\to\!\Delta(A(s))$, while a deterministic policy selects a single action $\pi:S\!\to\!A(s)$.  
The induced next-state distribution is $P(s'\mid s,a)$.

\vspace{0.2cm}

\textit{Note: $\Delta(S)$, $\Delta(A(s))$, $\Delta(A(I))$, and $\Delta(A_i)$ denote the corresponding probability simplexes, i.e., the sets of all probability distributions supported on $S$, $A(s)$, $A(I)$, and $A_i$, respectively.}

\subsection{\label{equilibria} Classical Equilibrium}

In the classical setting, an equilibrium is a stable strategic profile under which no player benefits from unilateral deviation. Nash, Stackelberg, and mixed-strategy equilibria provide the standard reference concepts \cite{Neumann1944TheTO,Nash1950,Nash1951,vonStackelberg1934,BasarOlsder1999,Walker2016}. These concepts remain the comparison point throughout the paper, but they will not directly apply once execution authority alternates and the coalition no longer behaves as a single standard rational agent.
\subsection{\label{subsec:Utility function} Classical utility function}

A utility function represents an agent’s preference ordering over consequences \cite{Bicchieri}. Under the von Neumann--Morgenstern axioms (see Section~\ref{sec:Classical axioms}), preferences over lotteries admit the expected-utility representation in Equation~\eqref{eq:vNM} \cite{Neumann1944TheTO} where \(u_i\) is the vNM utility unique up to a positive affine transformation (\(u_i\mapsto a\,u_i+b,\;a>0\)).

This classical representation is retained for the individual Human and AI utilities \(U_H\) and \(U_{AI}\); what changes later is the coalition-level structure induced by alternating execution authority.  For abstract or symbolic payoffs, utilities are often normalized (e.g., \(u_i\in[0,1]\)) since equilibrium reasoning depends only on ordinal, not cardinal, magnitudes.

\subsection{\label{sec:ClassValuefunction} Classical value function}

In dynamic settings, the value function aggregates current rewards into an expected discounted return \cite{Puterman1994,Bellman1957}. We recall the classical state-value form in Equation~\eqref{eq8} only because the hybrid model will later modify the reward-generation mechanism, not the recursive logic of dynamic evaluation itself.

Formally, in a sequential or dynamic setting, the \emph{state–value function} for player~\(i\) is defined as
\begin{equation}
v_i(s)
=\mathbb{E}\!\left[\sum_{t=0}^{\infty}\gamma^t\,r(s_t,a_{i,t})
\;\middle|\;s_0=s,\,\pi_i\right],
\label{eq8}
\end{equation}
where \(r(s_t,a_{i,t})\) denotes the immediate reward at time~\(T\), \(\gamma\in(0,1)\) is the discount factor ensuring convergence, and \(\pi_i\) is the policy mapping each state to an action distribution~\cite{Puterman1994}.  
This recursive formulation embodies Bellman’s principle of optimality~\cite{Bellman1957}, which asserts that an optimal strategy must remain optimal at every subsequent state reachable under that strategy.

\section{\label{sec:NeoGameT}Neo-Game Theory: foundations}


\emph{Neo-Game Theory} extends classical Game Theory to hybrid Human--AI systems interacting with  \textbf{\textit{Virtual Nature}}. In particular, the hybrid Human--AI setting departs from the classical baseline because execution authority may alternate and coalition-level behaviour need not satisfy the smooth trade-offs presupposed by expected-utility analysis \cite{Neumann1944TheTO,Simon1955,Machina1992,Amodei2016}.


\subsection{Departure from Classical equilibrium and continuity}

\label{non_archi}

\textbf{(a) Departure from classical equilibrium.} The following normal-form example shows why the realized outcome may differ from the classical equilibrium. In the classical simultaneous-move or Stackelberg interpretation, the outcome is determined by best-response logic in a fixed game. Here, instead, execution is selected endogenously by the delegation rule through policy divergence.

\begin{example} To ground the hybrid interaction in a concrete strategic setting, consider a simple two-action game where both the Human and the AI choose between
actions $A$ and $B$. Utilities differ across agents:
\[
\begin{array}{c|cc}
 & A_{AI} & B_{AI} \\
\hline
A_H & (3,1) & (0,2) \\
B_H & (2,0) & (1,3)
\end{array}
\]
Each cell lists $(U_H,U_{AI})$. The Human prefers $(A_H,A_{AI})$, while the AI prefers $(B_H,B_{AI})$, creating strategic misalignment.
\end{example}

In the simultaneous-move formulation, the unique Nash equilibrium is $(B_H,B_{AI})$, since $B_{AI}$ is a dominant strategy for the AI and $B_H$ is the Human’s best response to it. In a Stackelberg formulation with the Human as leader, anticipating the AI’s best response also yields the outcome $(B_H,B_{AI})$. In both cases, execution follows from strategic best-response reasoning within a fixed game.

Thus, the realized outcome need not coincide with the Nash or Stackelberg equilibrium $(B_H,B_{AI})$: execution authority depends on policy divergence and threshold-based delegation rather than strategic dominance or commitment. This illustrates how hybrid delegation may generate outcomes outside classical equilibrium concepts.

\vspace{0.5cm}

\textbf{(b) Departure from Archimedean Principle.} Classical rationality presumes that all players satisfy the four axioms of Section \ref{sec:Classical axioms} and that this is common knowledge. In hybrid Human--AI settings, however, the fourth axiom (usually called the Archimedean axiom) may fail at the coalition level because human adjustment can be gradual while AI behavior may change through discrete optimization and hard constraints, thereby producing non-Archimedean effects \cite{Fishburn1982,Hammond1976,Amodei2016}. This motivates the revised axiomatic structure introduced later in this section.

\begin{example}[Failure of transitivity under alternating control]
\label{ex:transitivity-failure}
Let $U_H$ and $U_{AI}$ denote Human and AI utilities on outcomes $A,B,C$:
\[
U_H(A)=0.9,\;U_H(B)=0.7,\;U_H(C)=0.6; \;\;
U_{AI}(A)=0.2,\;U_{AI}(B)=0.5,\;U_{AI}(C)=0.9.
\]
When the Human executes, preferences are $A\succ_H B\succ_H C$.
When the AI executes, preferences are $C\succ_{AI} B\succ_{AI} A$. If execution alternates switches over $T$, the coalition exhibits the following: 
\[
A\succ B,\quad B\succ C,\quad C\succ A.
\]

This formal non-transitivity result is stated and proved in Proposition~\ref{fail_transivity}
\end{example}

\begin{definition}[Union-over-time revealed preference]
Let $\{\succ_T\}_{T\ge0}$ denote the preference relations induced by
the executing agent at time $T$. The aggregated revealed relation is

\begin{equation}
A \mathcal{R} B
\;\Longleftrightarrow\;
\exists T \ge 0 \text{ such that } A \succ_T B .
\end{equation}

\end{definition}
\begin{proposition}[Alternating control induces failure of transitivity]
\label{fail_transivity}
Let $U_H$ and $U_{AI}$ induce individually transitive preference orderings $\succ_H$ and $\succ_{AI}$ for the Human and the AI, respectively. Suppose execution authority evolves over time according to the contextual delegation rule described in Section~\ref{entropythreshold}
and formalized in Equation~(\ref{eq:lambda_rule}), with $\lambda_T \in \{0,1\}$ selecting the active executor at instant $T$.
Then the coalition’s time-aggregated revealed comparison relation need not satisfy transitivity. In particular, alternating control can generate a cycle 
\[
A \succ B,\qquad B \succ C,\qquad C \succ A,
\]
even though each agent’s own preference ordering is internally transitive.

\end{proposition}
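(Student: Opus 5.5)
The claim is existential ("need not satisfy transitivity"), so it suffices to exhibit one admissible configuration in which the union-over-time revealed relation $\mathcal{R}$ contains a cycle. We take the utilities of Example~\ref{ex:transitivity-failure}. First we check that they induce strict, transitive orderings $A\succ_H B\succ_H C$ and $C\succ_{AI} B\succ_{AI} A$, as the hypothesis requires. By definition, the relation $\succ_T$ revealed at instant $T$ is $\succ_H$ when $\lambda_T=1$ (Human executes) and $\succ_{AI}$ when $\lambda_T=0$. Since $\mathcal{R}=\bigcup_{T\ge 0}\succ_T$, whenever both values of $\lambda_T$ occur along the trajectory we get $\mathcal{R}\supseteq \succ_H\cup\succ_{AI}$.

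The key steps are as follows.

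\emph{Step 1: the revealed pairs.} Take a Human-executed instant $T_1$, i.e.\ $\lambda_{T_1}=1$. Then $A\,\mathcal{R}\,B$ and $B\,\mathcal{R}\,C$ hold, both witnessed at $T_1$. Take an AI-executed instant $T_2$, i.e.\ $\lambda_{T_2}=0$. Then $C\,\mathcal{R}\,A$ holds, witnessed at $T_2$; in fact the AI reveals the full reversed chain. So $\mathcal{R}$ contains $A\mathcal{R}B$, $B\mathcal{R}C$, $C\mathcal{R}A$. Transitivity would force $A\,\mathcal{R}\,C$. We must also confirm that $\mathcal{R}$ is not simply complete and symmetric, which would make the cycle vacuous. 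The cleaner statement is that the asymmetric part fails: $A\mathcal{R}B$, $B\mathcal{R}C$ and $C\mathcal{R}A$ form a strict cycle in the sense displayed in the Proposition. This is exactly the failure of acyclicity, and hence of transitivity of any strict preference representing the coalition's choices. I would phrase the conclusion as: there is no transitive strict order $\succ$ on $\{A,B,C\}$ with $\mathcal{R}\subseteq\succ$, since such an order would contain a cycle and hence, by transitivity, $A\succ A$, contradicting irreflexivity.

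\emph{Step 2: realizability under the delegation rule.} This is the step I expect to be the main obstacle, since the rule in Equation~(\ref{eq:lambda_rule}) has not yet been stated. We must show that the rule actually produces both $\lambda_T=1$ and $\lambda_T=0$ along some trajectory. The plan is to use the two-threshold structure announced in the abstract: Jensen--Shannon divergence below the lower threshold means agreement, above the upper threshold means disagreement, and in between the decision is contextual. Under Human arbitration, the disagreement region assigns execution to the Human, and the agreement region (or the contextual rule) permits the AI to execute. We then choose Human and AI policies whose divergence $D_{JS}(\pi_H(\cdot\mid s_T)\,\|\,\pi_{AI}(\cdot\mid s_T))$ lies in different regions at two instants. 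For example, the AI's frequency-matched policy is initially far from the Human's, so the divergence is large and the Human executes. Later the AI's policy is close to the Human's, so the divergence is small and the AI executes. Because the AI's utility ordering differs from its imitated policy, the executed choice then reveals $\succ_{AI}$. Since the Jensen--Shannon divergence is continuous and takes values in $[0,\log 2]$, any thresholds $0<\tau_1<\tau_2<\log 2$ can be straddled by suitable policy pairs. So both regimes occur, and Step 1 applies. If the formal rule ties execution more rigidly to the policies, I would instead realize the alternation through the contextual region's scenario-specific rule, which by design may select either agent.
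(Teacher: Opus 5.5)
Your proposal follows the paper's proof: both set $\mathcal{R}=\bigcup_{T}\succ_T$, read $A\,\mathcal{R}\,B$ and $B\,\mathcal{R}\,C$ off Human-executed instants ($\lambda_T=1$) and $C\,\mathcal{R}\,A$ off an AI-executed instant ($\lambda_T=0$), and conclude that this cycle cannot lie inside any transitive ordering. The paper simply posits the required pattern of $\lambda_T$ values, so your Step~2 adds care that the paper omits; one correction is that when both full orderings are revealed, $\mathcal{R}$ is all off-diagonal pairs and its asymmetric part is empty, so the contradiction comes from irreflexivity plus the cycle, as your final phrasing correctly says, not from a cycle in the asymmetric part.
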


\begin{proof}
For each time $T$, let $\succ_T$ denote the revealed comparison relation induced by the executor selected via the contextual delegation rule described in Section~\ref{entropythreshold} and Equation~(\ref{eq:lambda_rule}) below. Thus, if

\[\succ_T =
\begin{cases}
\succ_H, & \text{if } \lambda_T(s) = 1,\\
\succ_{AI}, & \text{if } \lambda_T(s) = 0,
\end{cases} \]
then the time-aggregated relation is
\[\mathcal{R} := \bigcup_{T\ge 0} \succ_T.\]

Assume there exist three alternatives $A,B,C$ such that
\[U_H(A) > U_H(B) > U_H(C), \quad U_{AI}(C) > U_{AI}(A).\]
If at time $T$ we have $\lambda_T(s)=1$, then $\succ_T=\succ_H$
and hence $A \mathcal{R} B$.
If at time $T+1$ we again have $\lambda_{T+1}(s)=1$,
then $B \mathcal{R} C$.
If at time $T+2$ we have $\lambda_{T+2}(s)=0$,
then $\succ_{T+2}=\succ_{AI}$ and hence
$C \mathcal{R} A$.

Therefore $\mathcal{R}$ contains the cycle
$A \mathcal{R} B, B \mathcal{R} C,  C \mathcal{R} A$
and thus may fail to be represented by any single transitive ordering.
The failure of transitivity arises from time-dependent
execution authority determined by Equation~(\ref{eq:lambda_rule}),
not from inconsistency within $U_H$ or $U_{AI}$.
\end{proof}

This phenomenon resembles rank-dependent or imprecise-preference effects.

In this work, the \textbf{non-Archimedean feature} arises from the switching (guided by the regime priority) that preceds the comparison of utilities \cite{Vilaseca2008}, rather than from lottery weighting or belief imprecision. Hence, the coalition behavior is best represented by a lexicographic utility \cite{Fishburn1982,Hammond1976} (the execution regime is selected before utility is evaluated)

Both utilities $U_H$ and $U_{AI}$ (Human and AI utilities respectively, e.g., comfort, goals, safety, or efficiency) \textbf{individually, remain standard real-valued} satisfying the Archimedean axiom and regularity assumptions stated earlier. Formally,
\begin{equation}
\label{eq:lex_order}
(\rho_1,ut_1) \succ_{\mathrm{lex}} (\rho_2,ut_2)
\Longleftrightarrow
(\rho_1>\rho_2)
\text{ or }
(\rho_1=\rho_2 \text{ and } ut_1>ut_2).
\end{equation}

Within a fixed execution regime, the learning and dynamic programming recursion operate on the scalar realized utility. Across regimes, coalition outcomes are compared lexicographically, with regime priority preceding the comparison of utilities \cite{Vilaseca2008}.

Our formulation differs from nonlinear \emph{utility-blending} models in multi-agent reinforcement learning, welfare economics, and Human--AI collaboration~\cite{utilitysurvey2023,responsibility2022,synergy2024}. Those approaches combine Human and AI payoffs through concave or convex weights to represent \emph{a posteriori} synergy or shared responsibility. Here, by contrast, only the utility associated with the executed action is realized at each decision step, and the nonlinearity arises from threshold-based regime switching rather than payoff aggregation. Therefore, although each agent individually satisfies the classical axioms of Expected Utility Theory (Section~\ref{sec:ClGT}), these implications need not hold at the coalition level under alternating decision authority.

Figure~\ref{Figure1} is included only as a visual illustration of the lexicographic order and the non-Archimedean effect induced by switching. 
As the number of iterations (i.e., the number of move counts during the entire game) increases, these coalition-level discontinuities become less visible. Policies show poor convergence due to the small number of iterations.

\begin{figure}[H]
    \centering
    \includegraphics[width=1\textwidth]{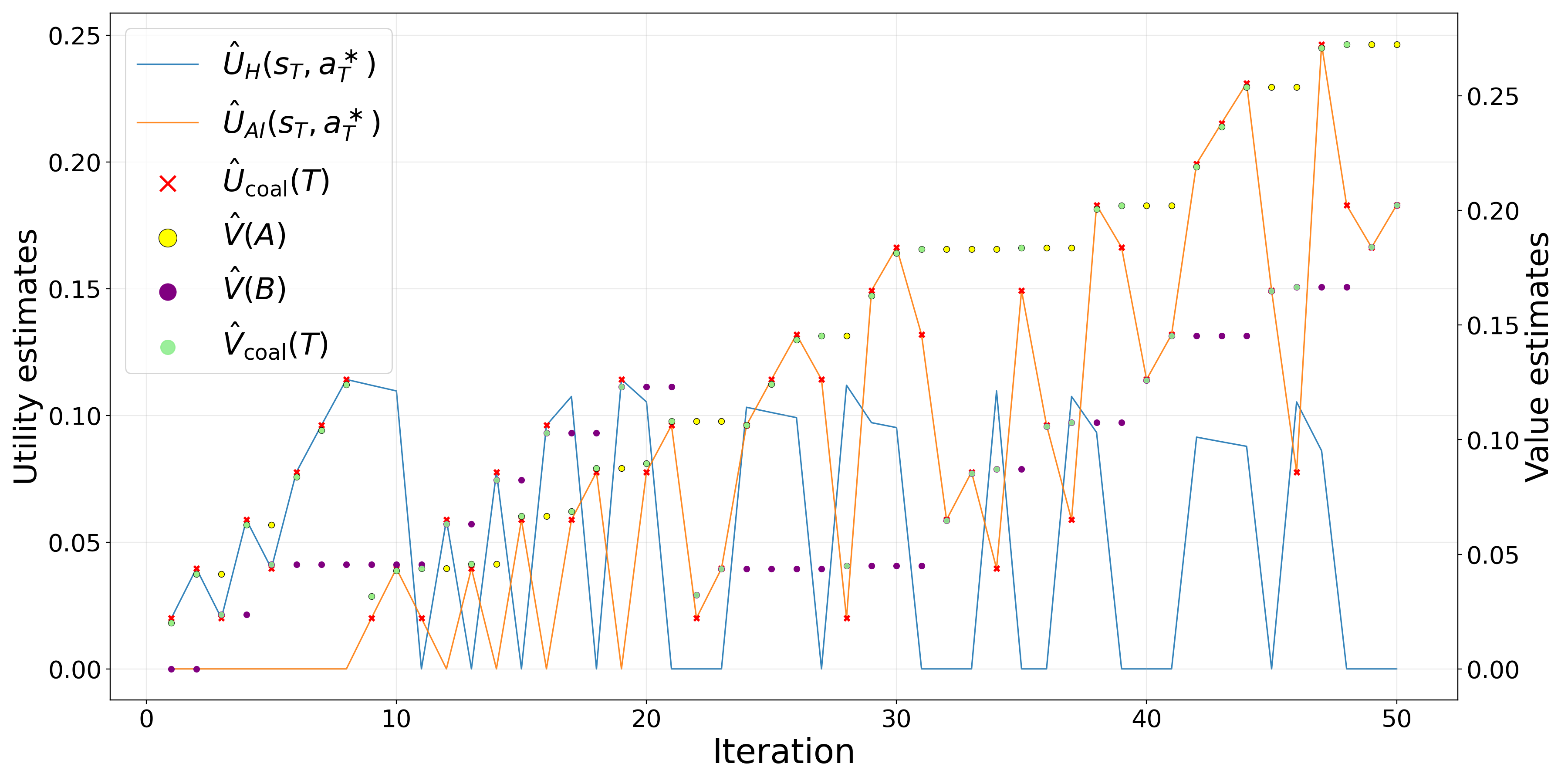}
    \caption{Illustrative example of utilities and value functions under lexicographic regime switching at 50 iterations. The figure is included only to visualize the discontinuous coalition structure induced by $\lambda_T$.}
    \label{Figure1}
\end{figure}
\FloatBarrier

\subsection{\textit{Neo-}Axiomatic principles}

\emph{Neo-Game Theory} extends classical Game Theory to hybrid Human--AI systems where, taken together Axioms 1-4 from Section~\ref{sec:Classical axioms} and Section~\ref{non_archi}, imply:
\begin{enumerate} [label=(\roman*)] 
    \item The Archimedean axiom fails because hybrid preferences lack smooth trade-offs;
    \item Decision authority alternates probabilistically between Human and AI;
    \item Utility and value functions may be discontinuous;
    \item The coalition need not admit a single stable rational ordering;
    \item Completeness can fail because the coalition does not act under a single coherent preference relation.
\end{enumerate}


\subsection{Virtual Nature $\mathcal{V}$}
\label{virtualnature}

\begin{definition}[Virtual Nature]
In \textit{Neo}-Game Theory, \emph{Virtual Nature} $\mathcal{V}$, as the transition function
\[
\mathcal{V}:S\times A\to\Delta(S),
\]
where $\Delta(S)$ denotes the set of probability distributions over states. 
It captures algorithmic stochasticity, hard thresholds, and discontinuities generated by digital rules, and therefore plays the role that classical Nature plays in standard games.
\end{definition}

$\mathcal{V}$ is interpreted then, as unconscious, having an optimization-driven behaviour and although random, yet influenced by embedded human biases and algorithmic priors.

\subsection{Hybrid games}
\label{hybridgame}

\begin{definition}[Hybrid game]
\label{def:hybridgame}
A hybrid game is a game where one of the players is Human and the other is an \textit{entity} in $\mathcal{V}$ or a hybrid coalition, so the game can be played by two players or more, interaction with the Physical Nature or Virtual Nature. Formally, a hybrid game induced by $\mathcal{V}$ is the tuple

\[ \mathcal{N}(\mathcal{V}) = (S, A, U, \lambda, P_{\mathcal{V}}), \]
where
\begin{itemize}
\setlength{\itemsep}{4pt}

\item $S$ is the state space; the system evolves over discrete time steps $T=0,1,2,\ldots$, and $s_T\in S$ denotes the state at instant $T$ ($t$ denoted index over trajectories);

\item $A$ is the common executed action space. The Human and the AI may have (possibly overlapping) proposal sets $A_H \subseteq A$ and $A_{AI} \subseteq A$, and we assume $A = A_H \cup A_{AI}$. At each time step, the feasible executed action set 
$\mathcal{A}_{\mathrm{exec}}(s)=\{a_{H,T}(s),a_{AI,T}(s)\}\subseteq A$ contains the proposals of the Human ($a_{H,T}(s)$) and AI ($a_{AI,T}(s)$) and $a_T^*$ represent the executed action.

\item  $U=\{U_H,U_{AI}\}$, where $U_i:S\times A\to\mathbb{R}$ denotes the individual utility of agent $i\in\{H,AI\}$;

\item $\lambda_T\!\in\!\{0,1\}$ the delegation factor indicating which action is executed.

\item $P_{\mathcal{V}} : S \times A \to \Delta(S)$ is the transition kernel generated by Virtual Nature $\mathcal{V}$, i.e.

$
P_{\mathcal{V}}(\cdot \mid s,a) =
P_{\mathcal{V}}\!\bigl(s_{t+1} \in \cdot \mid s_t = s,\; a_t = a\bigr)
$
\end{itemize}
\end{definition}

Hybrid games can be characterized as follows. 

\begin{itemize}
    \item Not all players are rational since AI agents lack genuine agency~\cite{huang2024positionpap,asocialpath}.

    \item  Human recognizes this asymmetry but cannot evaluate coalition-level rationality.

    \item  In our mathematical construct, the game action selection follows a probabilistic delegation rule rather than a deterministic optimization.

    \item  If one of the players does not proposed any action, the action executed is the one proposed by the other player. Otherwise, the game enters in the contextual zone defined later in Section~\ref{entropythreshold}.

    \end{itemize}


\label{game_ag_V} 

\begin{definition}[Games against Virtual Nature]
\label{def:game_ag_V} 
Games against Virtual Nature are the basic hybrid games (the analogous of games against Nature defined in classical Game theory). One player can be a Human agent, an entity in $\mathcal{V}$ or a hybrid coalition. The opposing player \emph{must be} an \textit{entity in} $\mathcal{V}$ (another AI agent, a bot, etc.)
\end{definition}

Thus, a game against $\mathcal{V}$ is a type of hybrid game defined as a stochastic game equipped with an endogenous delegation mechanism that determines which internal proposed action (i.e. whose player's action) is executed at each instant.


\subsection{Hybrid coalition}
\label{def:hybridcoalition}

\begin{definition}[Hybrid coalition]
\label{def:hybridcoalition}

A  hybrid coalition is composite by a Human and an \textit{entity} in $\mathcal{V}$ in which:

\begin{itemize}
    \item Each member of the coalition plays offering its own preference but at the end, only one is executed.

    \item AI initially depends on Human oversight but gradually refines its own strategies, necessitating continual supervision
~\cite{huang2024positionpap,Mart_nez_Fern_ndez_2021,intheloop}.
    \item The concept of negotiation can not be modelized as in Classical Game Theory.

    \item AI may require additional stabilization when operating under abrupt regime switching or discontinuous policy updates.

    \item  As rationality is no longer standard, optimization becomes layered, and convergence is not guaranteed under regime switching.
\end{itemize}

\end{definition}


\section{Formulation of \textit{Neo}-Game Theory}
\label{geneal_formulation}

In this section we introduce the general formulation common to all regimes of Neo-Game Theory. The Human arbitration scenario developed in the next section is obtained by specifying this generic structure in a particular way.

\begin{assumption}[Regularity conditions]
\label{ass:lambda}
\leavevmode\par
\begin{enumerate}[label=(\roman*),leftmargin=*,itemsep=0.2em,topsep=0.3em]

\item The state space $S$ and the action sets $A_i$ may be either
(a) compact and convex subsets of $\mathbb{R}^n$ (continuous case), or
(b) finite sets (discrete case).
Both formulations are admissible within this framework.
\emph{For the simulations, a finite-state, finite-action instance is used for clarity:
specifically, a two-state system $S=\{A,B\}$ with two discrete actions per agent.}

\item Individual utilities $U_H(s,a),U_{AI}(s,a) $ are bounded and Lipschitz continuous in $a_{H,T}, a_{AI,T}$ respectively.

\item Each policy update is a Lipschitz best response to the last executed action.

\end{enumerate}
\end{assumption}

\begin{definition}[Owner equals origin]
For each time $T$ define
\[
o_T =
\begin{cases}
\mathrm{AI}, & a_T^* = a_{AI,T},\\
\mathrm{H}, & a_T^* = a_{H,T}.
\end{cases}
\]
Owner is a credit label distinct from execution authority $\lambda_T(s)$.
\end{definition}


\subsection{\label{sec:Entropy} Entropy: Jensen--Shannon divergence.}

Entropy quantifies uncertainty in a random variable and serves as a positive functional over probability distributions.
We adopt Shannon entropy $S(Z)$ for a discrete variable $Z$ with mass function
$p$:
\[
S(Z)=-\sum_z p(z)\log p(z),
\]
where $\log$ is taken in any base $b>1$.
For an overview of entropy applications in AI and machine learning,
see~\cite{SepulvEntropyRev}.

Let $p_H(\cdot\mid s_T,T)$ and $p_{AI}(\cdot\mid s_T,T)$ denote the empirical
Human and AI policies at time $T$ in state $s_T$. Their corresponding entropies are
\begin{align}
S\!\big(p_H(\cdot\mid s_T,T)\big)
&= -\sum_{a\in A} p_H(a\mid s_T,T)\log p_H(a\mid s_T,T), \\
S\!\big(p_{AI}(\cdot\mid s_T,T)\big)
&= -\sum_{a\in A} p_{AI}(a\mid s_T,T)\log p_{AI}(a\mid s_T,T).
\end{align}

High Human entropy and low AI entropy
$\big(S(p_H(\cdot\mid s_T,T)) \gg S(p_{AI}(\cdot\mid s_T,T))\big)$ indicate great Human uncertainty and great AI confidence; convergence of these entropies indicates increasing agreement. The empirical policies are defined from observed frequencies:

\begin{equation}
\label{empirical_policies_H}%
\displaystyle
p_H(a\mid s_T,T)
:= \frac{N_H(s_T,a,T-1)}{N_H(s_T,T-1)},
\end{equation}

\begin{equation}
\label{empirical_policies_AI}%
\displaystyle
p_{AI}(a\mid s_T,T)
:= \frac{N_{AI}(s_T,a,T-1)}{N_{AI}(s_T,T-1)},
\qquad 
\end{equation}
where
\[
N_i(s_T,T-1) := \sum_{a\in A} N_i(s_T,a,T-1)
\]
counts how many times agent $i\in\{H,AI\}$ selected action $a$
in state $s_T$ up to time $T-1$. To measure distributional similarity, we use the symmetric and bounded Jensen-Shannon ($D_{JS}$) divergence~\cite{LinDJS,Nielsen2020}:
\begin{equation}
D_{\mathrm{JS}}(P\parallel Q)
=S\!\left(\tfrac{P+Q}{2}\right)
-\tfrac{1}{2}S(P)
-\tfrac{1}{2}S(Q).
\label{JSdivergence}
\end{equation}

The Jensen–Shannon divergence is bounded above by $\log 2$.
Using base-2 logarithms normalizes the divergence to the interval $[0,1]$
\cite{LinDJS,Endres2003}. Therefore, according to Equation~(\ref{JSdivergence}),
\begin{eqnarray}
\label{eq16}
D_{JS}\!\big(p_H(\cdot\mid s_T,T)\,\Vert\,p_{AI}(\cdot\mid s_T,T)\big)
&=&
S\!\left(\frac{p_H(\cdot\mid s_T,T)+p_{AI}(\cdot\mid s_T,T)}{2}\right)
\nonumber\\
&&-\frac{1}{2}S\!\big(p_H(\cdot\mid s_T,T)\big)
-\frac{1}{2}S\!\big(p_{AI}(\cdot\mid s_T,T)\big),
\label{eq:DJS}
\end{eqnarray}
which will be shortened to
\begin{equation}
\label{D_JS_T}
D_{JS}^T := D_{JS}\big(p_H(a \mid s,T),\, p_{AI}(a \mid s,T)\big),
\end{equation}
when the time $T$ is important.

In this paper, we use the Jensen--Shannon divergence as an operational alignment statistic between the empirical Human and AI policies. This choice is motivated by three properties that are directly relevant for delegation: symmetry, boundedness, and, with base-$2$ logarithms, normalization on $[0,1]$ \cite{LinDJS, Nielsen2020, Endres2003}. These properties make the thresholds 
directly interpretable across states and time and allow the same scale of policy discrepancy to be used throughout the delegation rule. Moreover, precisely because $D_{\mathrm{JS}}\in[0,1]$, it can also be used directly in Equation~\eqref{lambda_context} as a probability for the ex-ante modelling of Human arbitration in the contextual region, without any additional rescaling or transformation of the original divergence between Human and AI policies. In this sense, the same quantity serves both as a regime-classification statistic and as a probabilistic driver of contextual Human intervention. In the present framework, $D_{\mathrm{JS}}$ is therefore introduced as an observable and scale-stable proxy for policy misalignment, not as a welfare functional and not as a Lyapunov function for the learning dynamics. Stability in the model instead derives from the stochastic-approximation recursion, compactness of the policy space, bounded martingale noise, and the associated ODE/differential-inclusion arguments \cite{RobbinsMonro1951,KushnerYin2003,Benaim1999,BenaimHofbauerSorin2005}.

\subsection{Scenario-dependent delegation}
\label{entropythreshold}

At each time $T$ and state $s_T$, the Human and the AI propose actions according to $a_{H,T}\sim p_H(\cdot\mid s_T,T)$ and $a_{AI,T}\sim p_{AI}(\cdot\mid s_T,T)$. Delegation is then determined by a two-threshold rule based on $D_{JS}^T$, encoded by the binary variable $\lambda_T(s)$.

\begin{equation}
\label{eq:lambda_rule}
\lambda_T(s) =
\begin{cases}
\lambda_{\mathrm{agree}} & \text{if}\; D_{JS}^T \le \alpha_{\mathrm{agree}},\\[4pt]
\lambda_{\mathrm{ctx}}(\cdot) & \text{if}\; \alpha_{\mathrm{agree}} < D_{JS}^T < \alpha_{\mathrm{disagree}},\\[4pt]
\lambda_{\mathrm{disagree}} & \text{if}\; D_{JS}^T \ge \alpha_{\mathrm{disagree}},
\end{cases}
\end{equation}
where $\lambda_{\mathrm{agree}}$, $\lambda_{\mathrm{disagree}}$, $\lambda_{\mathrm{ctx}(\cdot)}\in\{0,1\}$. $\lambda_{\mathrm{ctx}}$ denotes the contextual delegation rule defined in Section~\ref{delegation_sc1} and $\alpha_{agree}$, $\alpha_{disagree}$ are the ex ante fixed probabilities that generate the range in which $D_{JS}$ is used to select one of the three regimes. We always assume $\alpha_{agree} <\alpha_{disagree}$.

In the Human arbitration scenario studied in this work, Definition~\ref{twothreshold} gives the scenario-specific form of $\lambda_T(s)$.


\subsection{Neo Utility function}
\label{neoutility}

We now define the induced coalition utility $U_{\mathrm{joint}}$ from the individual utilities $U_H$ and $U_{AI}$. Formally, \(U:C\!\to\!\mathbb{R}\) satisfies
\(x\succeq y \iff U(x)\!\ge\!U(y)\) for all \(x,y\in C\).

\begin{equation}
\label{eq:Ujoint}
U_{\mathrm{joint}}(\lambda_T(s),s_T,a_T^*)
=
\begin{cases}
U_{AI}(s_T,a_T^*) 
& \text{if $a_T^*=a_{AI,T}$ } , \\[4pt]
U_H(s_T,a_T^*)
& \text{if $a_T^*=a_{H,T}$ }, \\[6pt]
U_{\lambda_{\mathrm{ctx}}(\cdot)}(s_T,a_T^*)\
& \text{if } \alpha_{\mathrm{agree}} < D_{JS}^T < \alpha_{\mathrm{disagree}}.
\end{cases}
\end{equation}

\label{subsec:utility_estimates}

We distinguish the \emph{structural} utilities $U_H,U_{AI}:S\times A\to\mathbb{R}$ (from Definition~\ref{def:hybridgame}) from the \emph{learned} utility traces $\widehat U_H,\widehat U_{AI}$ updated online.

Utility credit follows the executed action. The agent whose action is executed receives reinforcement through an exponentially weighted trace (EWMA):
\begin{equation}
 \label{estimated_utility}   
\widehat U_i^{\,T+1}(s,a)
=
(1-\beta)\,\widehat U_i^{\,T}(s,a)
+
\beta\,\mathbf{1}\{i=o_T \land (s,a)=(s_T,a_T^\ast)\},
\qquad i\in\{H,AI\}.
\end{equation}
\noindent
The structural utilities $U_H,U_{AI}$ define preferences; $\widehat U_{i,T}$ are online estimates used by the learning rule and value recursion.

\subsection{Neo-Policies: distributions and asymptotic notation}

Next, we specify how policies updates are represented in the hybrid framework. Proposed actions determine delegation through $D_{JS}$, whereas the executed action $a_T^\ast$ determines reward and state transition.

When Virtual Nature $\mathcal{V}$ is fixed, we write $P_{\mathcal{V}} \equiv P$ for brevity. We use a state-indexed notation. For each state $s_j\in S$: $a_{H,T}(s_j)\in A$ and $a_{AI,T}(s_j)\in A$ denote the Human’s and AI’s proposed actions at $s_j$; $p_{H,s_j}$ and $p_{AI,s_j}$ denote their empirical action distributions over $A$; $p_i(s_j,T)$ is the instantaneous empirical policy at time $T$.

Because delegation is triggered by $D_{JS}$ between Human and AI policies, the long-run behavior of the system is naturally characterized by empirical frequencies rather than by best responses.


\subsubsection{Empirical frequencies}

\textbf{\textit{a) Agreement frequency.}} Define the empirical agreement-by-closeness frequency after $T$ realized decision turns as
\begin{equation}
\label{eq:f_agree}
f_{\mathrm{agree}}(T)
=
\frac{1}{T}
\sum_{t=1}^{T}
\mathbf{1}\!\left\{
D_{JS}^{t} \le \alpha_{\mathrm{agree}}
\right\}.
\end{equation}
It measures the empirical fraction of decision turns up to time $T$ for which the divergence between the Human and AI policies lies within the agreement band $D_{JS}^{t} \le \alpha_{\mathrm{agree}}$.

\textbf{\textit{b) Contextual frequency.}} Define the empirical contextual-regime frequency as
\begin{equation}
f_{\mathrm{ctx}}(T)
=
\frac{1}{T}
\sum_{t=1}^{T}
\mathbf{1}\!\left\{
\alpha_{\mathrm{agree}} < D_{JS}^{t} < \alpha_{\mathrm{disagree}}
\right\}.
\end{equation}
It measures the empirical fraction of decision turns for which divergence lies strictly between the two thresholds, corresponding to the contextual delegation regime.

\textbf{\textit{c) Disagreement frequency.}}
Define the empirical disagreement frequency as
\begin{equation}
f_{\mathrm{disagree}}(T)
=
\frac{1}{T}
\sum_{t=1}^{T}
\mathbf{1}\!\left\{
D_{JS}^{t} \ge \alpha_{\mathrm{disagree}}
\right\}.
\end{equation}
It measures the empirical fraction of decision turns for which divergence exceeds the disagreement threshold, corresponding to persistent misalignment.

Since  $0 \le \alpha_{\mathrm{agree}} < \alpha_{\mathrm{disagree}} \le 1$, the $D_{JS}$ range $[0,1]$ is partitioned into three mutually exclusive regions, so that, for every $T$,
\begin{equation}
\label{freq_summa}    
f_{\mathrm{agree}}(T)+f_{\mathrm{ctx}}(T)+f_{\mathrm{disagree}}(T)=1.
\end{equation}


\subsubsection{Policy learning}

\label{policy_update}

\noindent For each state $s$ and time $T$, $p_i(s,T)\in\Delta(A_i)$ denotes agent $i$'s instantaneous empirical policy, i.e., a probability vector that represents the instantaneous empirical action distribution of agent $i$ in state $s$. Decisions and divergence computations at time $T$ are based on $p_i(s,T)$, i.e., the distribution used at decision time $T$ to compute delegation and divergence. When the sequence converges, we write
\begin{equation}
   \label{limit_policies}
\pi_i(s)=\lim_{T\to\infty}p_i(s,T),
\end{equation}
where $\pi_i(s)$ denotes the corresponding limiting stationary policy. Thus $p_i(s,T)$ governs finite-time execution, whereas $\pi_i(s)$ is used only for the asymptotic description of long-run behavior. We assume that $p_{i}(s,T)$ converges when $T\to\infty$, so that $\pi_{i}(s)$ is a well-defined probability. If the sequence of probability distributions $p_i(s,T)$ converges, then the limit itself is a valid probability distribution:
\[
\sum_a p_i(s,T)(a)=1\ \; \forall T
\quad\Rightarrow\quad
\sum_a \pi_i(s)(a)=1.
\]
Thus the limit remains a valid stationary probability measure.


\begin{remark}
\textbf{(i) Asymptotic regime}. 
All equilibrium statements are formulated in the large-sample limit. For each state $s$ visited infinitely often, the empirical policies $p_i(\cdot\mid s,T)$, $i\in\{H,AI\}$, follow a stochastic approximation recursion whose limiting behavior is governed by the differential inclusion (see Remark~\ref{justification_v_sthicastic} below).
\textbf{(ii) Finite-$T$ convention}. When a state $s$ is first observed, $P(\cdot\mid s)$ is initialized as uniform (Dirichlet$(1)$ smoothing); this initialization does not affect equilibrium results, which hold asymptotically. The convention is stated for the finite-state, finite-action setting; for continuous spaces, the same reasoning applies after discretization or through empirical measures on measurable partitions.
\end{remark}

Utilities record influences history, whereas policies evolve directly from executed actions. After each interaction, the empirical policy $p_i(s,T)$ is updated toward the executed action $a_T^\ast$ according to Equation~\eqref{monroeq}. We consider both constant-gain and Robbins--Monro step-size schedules, since they generate different transient behavior while the same conceptual learning mechanism.

\begin{align}
\label{monroeq}
p_i^{(T+1)}(s)
   =(1-\eta_T)\,p_i^{(T)}(s)+\eta_T\,\delta_{a_T^\ast},
   \qquad i\in\{H,AI\},\ \eta_T>0,
\end{align}
where $\delta_{a_T^\ast}$ in the policy update is the Dirac/one-hot action vector for the executed action $a_T^\ast$. If we have (as in the simulations) two actions ($a_0$, $a_1$), then 
\[
\delta_{a_T^\ast} =
\begin{cases}
(1,0), & \text{if } a_T^\ast = a_0, \\
(0,1), & \text{if } a_T^\ast = a_1.
\end{cases}
\]

The qualitative behavior of this recursion depends on the step-size schedule. Because $A_H,A_{AI}\subseteq A$, both policies are represented on the common executed-action space $\Delta(A)$, while the proposal sets restrict which actions each agent may sample. Human policy updates occur only when $D^T_{JS} \leq \alpha_{\mathrm{agree}}$, thereby giving the AI a learning advantage in the hybrid update scheme.

Consider a \textit{constant step size. } Under a constant step size $\eta_T \equiv \eta \in (0,1)$, the executed action $a_T^\ast$ and the delegation rule $\lambda_T(s)$ are measurable functions of the current policy state $x_T=(p_H(T),p_{AI}(T))$ and the current state $s_T$. Hence $x_{T+1}$ depends only on $(x_T,s_T)$, so the policy process $\{x_T\}$ is a time-homogeneous Markov chain on the compact policy space $\mathcal X$. If this chain is irreducible and aperiodic, then it admits a unique invariant distribution, and convergence holds in the ergodic sense through occupation measures. Thus, in the constant-gain regime, almost-sure convergence need not occur; instead, the process stabilizes in distribution around its stationary/invariant law, and equilibrium statements are interpreted through long-run empirical frequencies.

Under diminishing step sizes, $p_i(\cdot\mid s,T)$ approaches the internally chain transitive set of the limiting dynamics and converges almost surely to a limit distribution $\pi_i(\cdot\mid s)$ when that set is a singleton.

\begin{assumption}
    \label{eta_assumption}
    For diminishing step sizes, we assume that the learning rate $\eta$ satisfies (i) $\sum_T \eta_T = \infty$, and (ii) $\sum_T \eta_T^2 < \infty$. 
\end{assumption}

If the step sizes satisfy the previous assumption \cite{RobbinsMonro1951}, the recursion becomes a Robbins--Monro stochastic-approximation scheme.
Under standard ODE-method results \cite{RobbinsMonro1951,KushnerYin2003,BenaimHofbauerSorin2005},
the iterates converge almost surely to an invariant set
of the limiting ODE.

Policy adaptation is asymmetric: the AI updates toward the executed action in all regimes, whereas the Human updates only in the agreement region $(D_{JS}^T\le\alpha_{\mathrm{agree}})$. Thus the AI gradually aligns with realized behavior, while Human reinforcement occurs only under low-divergence alignment.

\subsection{Bellman recursion: reward and value function}
\label{neovalue}

The value function links static utility with dynamic optimization through Bellman’s equation. In this framework, the proposed actions determine delegation through $D_{JS}$, whereas the executed action $a_T^\ast$ determines the reward and state transition. The classical individual Bellman equation is

\begin{equation}
v_i(s)
=\max_{a_i\in A_i}
\Big\{\,u_i(s,a_i,a_{-i})
+\gamma\,\mathbb{E}[v_i(s_{T+1})]\Big\}.
\label{bellman}
\end{equation}
where \(s_{T+1}\) denotes the successor state. This formulation underlies both equilibrium reasoning and reinforcement-learning models in which agents adapt through reward accumulation~\cite{Puterman1994}. We use $v_i$ for the classical individual value function; the hybrid coalition value function is denoted $V$ (Section~\ref{neovalue}).
 
In later sections, the continuity and smoothness assumptions implicit in Bellman’s framework are relaxed: utilities may become discontinuous and control alternates binary delegation between Human and AI agents. The corresponding executed-action value function is

\begin{equation}
v(s) =
\mathbb{E}\left[
\sum_{t=0}^{\infty}
\gamma^t
r(s_t, a_t^*)
\;\middle|\;
s_0 = s
\right]
\label{V(s)}
\end{equation}
Here $a_t^*\in A$ denotes the executed action at time $t$, selected by the delegation mechanism. The state evolves according to $s_{t+1}\sim P(\cdot \mid s_t, a_t^*)$ under $\mathcal{V}$.

The proposed actions $a_{H,\;t}\sim p_H(\cdot\mid s_t,t)$ and $a_{AI,\;t}\sim p_{AI}(\cdot\mid s_t,t)$ are internal variables used to compute $D_{JS}$ but they do not enter the transition kernel. The discount factor $\gamma\in(0,1)$ weights future rewards. The resulting regime-conditioned Bellman recursion is formalized in Proposition~\ref{prop:hybridbellman}.

In the present framework, the realized \textbf{coalition reward} equals the utility of the executing agent:

\begin{equation}
r(s_T,a_T^*) :=
\begin{cases}
U_{AI}(s_T,a_T^*) & \text{if } a_T^*=a_{AI},\\
U_H(s_T,a_T^*)    & \text{if } a_T^*=a_H.
\end{cases}
\label{eq:reward_realized}
\end{equation}

The scalar reward function $r:S\times A\to\mathbb{R}$ entering dynamic programming coincides with the coalition utility realized after delegation (defined from the player's utilities \textbf{after} regime selection):

\begin{equation}
\label{rew(s)}
r(s_T,a_T^*;\lambda_T(s_T))
=
\lambda_T(s_T)\,U_H^{T}(s_T,a_T^*)
+
(1-\lambda_T(s_T))\,U_{AI}^{T}(s_T,a_T^*).
\end{equation}

Note that Equation~(\ref{rew(s)}) is equivalent to the piecewise reward definition in Equation~(\ref{eq:reward_realized}); the mixture form is introduced only only for compactness in the Bellman recursion.
At time $T$, the realized reward is $r_T := r_T(s_T,a_T^*) = r(s_T,a_T^*;\lambda_T(s_T))$.
Thus, reward is an instantaneous, regime-conditioned payoff.

Substituting~(\ref{rew(s)}) into~(\ref{V(s)}) yields
\[
V(s)
=
\mathbb{E}\!\left[
\sum_{t=0}^{\infty}
\gamma^t
\big(
\lambda_t(s_t)\,U_H^t(s_t,a_t^*)
+
(1-\lambda_t(s_t))\,U_{AI}^t(s_t,a_t^*)
\big)
\;\middle|\; s_0=s
\right].
\]
Equivalently, the induced Bellman recursion takes the form of Equation (\ref{eq:HybridBellman}) below.

Outside the contextual region, the delegation variable fixes a single executing player.
Within the contextual region, the reward is determined by the utility associated with whichever action is selected by the delegation mechanism.


Throughout this subsection we suppose that Assumption~\ref{ass:lambda} holds, that the discount factor satisfies $\gamma\in(0,1)$, and that the transition kernel $P(\cdot\,|\,s,a)$ is well defined for every executed action.

When evaluating the Bellman operator, the delegation variable $\lambda_T(s)$ is taken as fixed so the maximization is conditional on the realized delegation regime at that step.

\begin{proposition}[Hybrid Bellman recursion]
\label{prop:hybridbellman}
Let $V$ denote the induced value function under the executed delegation regime.
Conditional on the realized delegation variable $\lambda_T(s)$, and given the transition kernel $P(\cdot\mid s,a)$ of $\mathcal{V}$, the Bellman recursion takes the form

\begin{equation}
V(s_T)
=
\max_{a^*\in\mathcal{A}_{\mathrm{exec}}(s)}
\left\{
r(s_T,a^*;\lambda_T(s))
+
\gamma
\mathbb{E}_{s_{T+1}\sim P(\cdot|s,a^*)}[V(s_{T+1})]
\right\},
\label{eq:HybridBellman}
\end{equation}
\noindent
where $\mathcal{A}_{\mathrm{exec}}(s)=\{a_H(s),a_{AI}(s)\}$ denotes the set of feasible executed actions induced by the Human and AI proposals under the delegation mechanism.
\end{proposition}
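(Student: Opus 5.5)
The plan is to treat the statement as the dynamic-programming characterisation of an MDP whose per-step action set is restricted to $\mathcal{A}_{\mathrm{exec}}(s)$ and whose reward is the regime-conditioned coalition reward of Equation~\eqref{rew(s)}. The argument will then follow the classical contraction route of \cite{Puterman1994,Bellman1957}. Throughout, the realized delegation variable $\lambda_T(s)$ and the proposals $a_H(s),a_{AI}(s)$ are held fixed, as stipulated just before the proposition.

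\textbf{Step 1: the reward is bounded.} Since $\lambda_T(s)\in\{0,1\}$, the mixture form \eqref{rew(s)} simply selects one of $U_H(s,a^*)$ or $U_{AI}(s,a^*)$, so it coincides with \eqref{eq:reward_realized}. By Assumption~\ref{ass:lambda}(ii), both utilities are bounded, so
\[
|r(s,a^*;\lambda)|\le R:=\max\{\|U_H\|_\infty,\|U_{AI}\|_\infty\}.
\]

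\textbf{Step 2: the operator is a contraction.} On the Banach space $B(S)$ of bounded functions with the sup norm, define
\[
(\mathcal{T}V)(s)=\max_{a^*\in\mathcal{A}_{\mathrm{exec}}(s)}\Bigl\{r(s,a^*;\lambda_T(s))+\gamma\,\mathbb{E}_{s'\sim P(\cdot\mid s,a^*)}[V(s')]\Bigr\}.
\]
The maximum is attained because $\mathcal{A}_{\mathrm{exec}}(s)$ has at most two elements. This is also where the discrete/continuous dichotomy of Assumption~\ref{ass:lambda}(i) causes no difficulty, since measurability of $V$ is not needed for a pointwise maximum over two points; in the continuous case one can restrict to bounded measurable $V$. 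The usual inequality $|\max_a f(a)-\max_a g(a)|\le \max_a|f(a)-g(a)|$ then gives
\[
\|\mathcal{T}V-\mathcal{T}W\|_\infty\le\gamma\|V-W\|_\infty .
\]
Banach's fixed-point theorem therefore yields a unique fixed point $V^\star$, which is exactly Equation~\eqref{eq:HybridBellman}.

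\textbf{Step 3: the fixed point is the induced value function.} It remains to show that $V^\star$ equals the induced value function $V$, i.e.\ the supremum over executed-action selectors of the discounted sum in \eqref{V(s)}, with reward given by \eqref{rew(s)}.
\begin{itemize}
\item First decompose the series as $r(s_0,a_0^*;\lambda_0)+\gamma\sum_{t\ge1}\gamma^{t-1}r_t$. Condition on $s_1\sim P(\cdot\mid s_0,a_0^*)$ using the tower property and the Markov property of $\mathcal{V}$. This step uses the fact, noted earlier, that proposals affect only $\lambda$ and not the kernel.
\item Any selector's value then satisfies $V_\sigma\le \mathcal{T}V_\sigma$-type inequalities, and iterating gives $V_\sigma\le V^\star$.
\item Conversely, the selector that picks the argmax in each state achieves $V^\star$.
\end{itemize}
Together these establish \eqref{eq:HybridBellman} with $V=V^\star$.

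\textbf{Main obstacle.} The delicate point is the time-dependence of $\lambda_T$ and of the empirical policies generating the proposals, since these make the problem non-stationary. I would handle this in one of two ways:
\begin{itemize}
\item \emph{Regime-by-regime.} Read the statement, as the proposition does, conditionally on the realized $\lambda_T(s)$ and proposals, and freeze them as a stationary delegation map $s\mapsto(\lambda(s),\mathcal{A}_{\mathrm{exec}}(s))$. This is legitimate within a fixed regime.
\item \emph{Asymptotically.} Invoke convergence of $p_i(s,T)$ to $\pi_i(s)$ from \eqref{limit_policies}, so that $\lambda$ and $\mathcal{A}_{\mathrm{exec}}$ become stationary in the limit.
\end{itemize}
If neither freezing is acceptable, the fallback is to index the operators $\mathcal{T}_T$ by time. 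Each $\mathcal{T}_T$ is still a $\gamma$-contraction, so the non-stationary value $V_T=\mathcal{T}_T V_{T+1}$ is well defined by the uniform bound $R/(1-\gamma)$. This gives \eqref{eq:HybridBellman} with $V$ replaced by $V_T$, $V_{T+1}$ on the two sides.
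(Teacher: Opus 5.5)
Your proof is correct, but it takes a genuinely different and much more substantive route than the paper.

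The paper's proof is a one-line substitution. It writes the stage reward as $r(s,a;\lambda_T(s))=\lambda_T(s)\hat U_H(s,a)+(1-\lambda_T(s))\hat U_{AI}(s,a)$, using the \emph{current EWMA traces}. It then takes as given that the discounted value satisfies $V(s)=\max_{a^*\in\mathcal{A}_{\mathrm{exec}}(s)}\mathbb{E}[r(s,a^*;\lambda_T(s))+\gamma V(s_{T+1})\mid s]$ and substitutes. The recursion is therefore posited, not derived. Its only content is that, with $\lambda_T$ and the traces frozen at step $T$, the stage reward is the $\lambda$-conditioned one.

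Your Banach-contraction argument plus the verification step gives what that does not: existence and uniqueness of a bounded solution, and an actual link between the recursive $V$ and a discounted-sum definition. It also makes explicit what must be frozen, namely a delegation map on all of $S$ rather than only at the realized $s_T$.

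The price is that you commit to reading $V$ as the optimal value over selectors from $\mathcal{A}_{\mathrm{exec}}$. If $a_t^\ast$ is dictated by delegation, as in \eqref{V(s)} taken literally, one gets an evaluation equation without the max. Your reading is therefore the one under which the statement holds, and it is the one the paper's proof tacitly uses.

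Two adjustments are needed. First, the paper's reward uses the traces $\hat U_i^T$ rather than the structural $U_i$, so your freezing should also fix the traces. Boundedness is unaffected, since the traces lie in $[0,1]$. Second, your ``asymptotic'' alternative only makes the proposals and $\lambda$ stationary in distribution: proposals are still sampled from $\pi_i$, and $\lambda$ is random in the contextual band. That would give an expectation of a maximum rather than \eqref{eq:HybridBellman}, so the regime-by-regime freezing is the version that matches the statement.
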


\noindent

Conditional on the realized delegation regime, the value function $V(s_T)$ satisfies the adapted Bellman recursion in Proposition~\ref{prop:hybridbellman}, once execution authority is selected endogenously.

\begin{proof}
By definition, the realized scalar reward defined in \eqref{rew(s)} is

\[
r(s,a;\lambda_T(s))
=
\lambda_T(s)\,\hat U_H(s,a)
+
(1-\lambda_T(s))\,\hat U_{AI}(s,a),
\]
where the utility traces $\hat U_H^{T}$ and $\hat U_{AI}^{T}$  are the current EWMA estimates. So, when the Bellman operator is evaluated at decision step $T$, the dynamic programming operates on the current traces.
Using the discounted value function,
\[
V(s)
=
\max_{a^*\in\mathcal{A}_{\mathrm{exec}}(s)}
\mathbb{E}\!\left[
r(s,a^*;\lambda_T(s))
+
\gamma V(s_{T+1})
\;\middle|\; s
\right],
\qquad s_{T+1}\sim P(\cdot\mid s,a^*),
\]
the substitution into the recursion yields \eqref{eq:HybridBellman}.
\end{proof}

Equation~\eqref{eq:HybridBellman} extends the classical Bellman recursion \cite{Bellman1957,Puterman1994} by incorporating the delegation factor $\lambda_T(s)$. When $\lambda_T(s)\equiv 1$ or $\lambda_T(s)\equiv 0$, the recursion reduces to the standard single-agent case for $\hat U_H$ or $\hat U_{AI}$. When $\lambda_T(s)$ is state-dependent, the recursion becomes regime-conditioned because the realized reward is selected by the entropy-threshold delegation rule.


\subsection{Neo-Equilibrium}
\label{sec:equilibrium}


\subsubsection{Frequency-convergence equilibrium.}

The equilibrium defined here differs from Nash, Stackelberg, or correlated equilibrium. Classical equilibria characterize fixed points of strategic best responses under static preferences. In contrast, the hybrid equilibrium studied in this model is \textit{dynamical}: it refers to the asymptotic stabilization of the stochastic policy-adaptation process generated by the Robbins–Monro/Markov chain updates and delegation thresholds.

Accordingly, equilibrium is defined in terms of the limiting behavior of the learning dynamics rather than as an instantaneous optimality condition.
Our equilibrium does not describe a one-shot best-response fixed point, but the asymptotic stabilization of the stochastic delegation-learning process generated by repeated interaction under $\mathcal{V}$.

In this framework, the equilibrium is defined through the asymptotic behavior of the empirical regime frequencies. These frequencies satisfy Equation~\eqref{freq_summa}.  

\begin{proposition}     
[Piecewise regular mean-field drift]
\label{lem:piecewise_drift_common}

Let $\mathcal X=\Delta(A_H)\times\Delta(A_{AI})$ denote the compact joint policy space, and let the delegation rule be determined by thresholds on the continuous map $D_{JS}:\mathcal X\to\mathbb R$. For each regime $\rho\in\{\mathrm{agree},\mathrm{context},\mathrm{disagree}\}$, define the one-step update map
$H(x,s_T):=\delta_{a^*(x,s_T)}-x$, and the corresponding conditional mean drift
\[h_\rho(x) :=
\mathbb E\!\left[ H(x,s_T)\mid x_T=x,\ \lambda(x)=\rho\right].\]
Define the overall mean drift by
$h(x)=h_\rho(x)$, whenever x lies in regime $\rho$.

\noindent Assume:
\begin{itemize}
    \item[(i)] The random increment $H(x,s_T)$ is bounded uniformly in $(x,s_T)$;
    \item[(ii)] For each regime $\rho$, the map $h_\rho$ is Lipschitz on the closure of that regime;
    \item[(iii)] $D_{JS}$ is continuous on $\mathcal X$.
\end{itemize}
Then:

\begin{enumerate}
\item The state space $\mathcal X$ is compact and convex.

\item The delegation rule induces a partition of $\mathcal X$ into closed threshold surfaces and open regime interiors.

\item Each regime-wise mean drift $h_\rho$ is bounded and locally Lipschitz on its regime.

\item The overall drift $h$ is piecewise Lipschitz and locally bounded on $\mathcal X$.

\item The stochastic recursion
\[
x_{T+1} = x_T+\eta_T\Big(h(x_T)+M_{T+1}\Big),
\]
where
\[M_{T+1}:= H(x_T,s_T)-h(x_T)\]
has bounded martingale noise and remains in the compact set $\mathcal X$ for every step-size sequence $(\eta_T)$ with $0\le \eta_T\le 1$.
\end{enumerate}
\end{proposition}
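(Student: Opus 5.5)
The plan is to verify the five items in order, since each is essentially a direct consequence of the hypotheses together with elementary facts about simplices and continuous maps.

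\emph{Item 1.} Both $\Delta(A_H)$ and $\Delta(A_{AI})$ are simplices over finite action sets, in the finite case of Assumption~\ref{ass:lambda}(i). Each is therefore a closed, bounded, convex subset of a Euclidean space, and hence compact and convex. A product of compact convex sets is compact (by Tychonoff, or simply Heine--Borel in finite dimension) and convex. In the continuous case I would read $\Delta(\cdot)$ as the space of probability measures with the weak topology, which is compact by Prokhorov because the action sets are compact; the same conclusion then follows.

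\emph{Item 2.} I will use hypothesis (iii), continuity of $D_{JS}$. The regimes are the preimages
\[
D_{JS}^{-1}\big([0,\alpha_{\mathrm{agree}}]\big),\quad
D_{JS}^{-1}\big((\alpha_{\mathrm{agree}},\alpha_{\mathrm{disagree}})\big),\quad
D_{JS}^{-1}\big([\alpha_{\mathrm{disagree}},1]\big).
\]
Since $\alpha_{\mathrm{agree}}<\alpha_{\mathrm{disagree}}$, these three sets are pairwise disjoint and cover $\mathcal X$. The level sets $D_{JS}^{-1}(\{\alpha_{\mathrm{agree}}\})$ and $D_{JS}^{-1}(\{\alpha_{\mathrm{disagree}}\})$ are closed, as preimages of closed points. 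The sets $D_{JS}^{-1}([0,\alpha_{\mathrm{agree}}))$, $D_{JS}^{-1}((\alpha_{\mathrm{agree}},\alpha_{\mathrm{disagree}}))$ and $D_{JS}^{-1}((\alpha_{\mathrm{disagree}},1])$ are relatively open in $\mathcal X$. Together the closed threshold surfaces and these open interiors give the claimed partition.

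\emph{Items 3 and 4.} Boundedness of $h_\rho$ follows from (i): if $\|H\|\le C$, then the conditional expectation satisfies $\|h_\rho\|\le C$. Note that $C\le 2$ automatically, since $H=\delta_{a^*}-x$ is a difference of two probability vectors. Local Lipschitz continuity on the regime is inherited from (ii) by restriction. The overall drift $h$ agrees on each piece with a Lipschitz, bounded function, so it is piecewise Lipschitz. It is globally bounded by $C$, hence locally bounded. Discontinuities can occur only on the threshold surfaces from Item~2.

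\emph{Item 5.} This is the step requiring care, though it is not hard. First I would show that $M_{T+1}$ is a martingale difference with respect to $\mathcal F_T=\sigma(x_0,s_0,\dots,x_T)$. By construction $\mathbb E[H(x_T,s_T)\mid\mathcal F_T]=h(x_T)$, provided the conditioning on the regime is $\mathcal F_T$-measurable, which holds because $\lambda$ is a function of $x_T$. I would state explicitly that $h$ is defined as this conditional mean given $x_T$ (and the law of $s_T$). Boundedness follows from $\|M_{T+1}\|\le\|H\|+\|h\|\le 2C$.

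For invariance, substitute the definition of $M_{T+1}$ into the recursion. Since $h(x_T)+M_{T+1}=H(x_T,s_T)=\delta_{a^*}-x_T$, the update becomes
\[
x_{T+1}=(1-\eta_T)x_T+\eta_T\,\delta_{a^*}.
\]
This is a convex combination of points of $\mathcal X$ whenever $0\le\eta_T\le 1$, matching Equation~\eqref{monroeq}. Here $\delta_{a^*}$ is interpreted componentwise as one-hot vectors in each factor simplex (or, for the Human, as $x_T$ itself when no update occurs). By convexity from Item~1 and induction on $T$, $x_T\in\mathcal X$ for all $T$.

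The main subtlety I expect is the component interpretation of $\delta_{a^*}-x$. The Human component updates only in the agreement region, so $H$ should be defined with a zero Human component outside that region. Once this is made explicit, the convex-combination argument still applies blockwise.
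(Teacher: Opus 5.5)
Your proposal is correct and follows the paper's proof essentially step for step. Both arguments use compactness and convexity of the product of simplices, the open/closed partition obtained as preimages under the continuous map $D_{JS}$, regime-wise regularity inherited from hypothesis (ii), and invariance via the rewrite $x_{T+1}=(1-\eta_T)x_T+\eta_T\,\delta_{a^*}$ as a convex combination.

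The differences are minor. You bound $h_\rho$ directly from (i) by bounding the conditional expectation, whereas the paper argues that a Lipschitz map on the compact closure of a regime is bounded. You also make explicit the blockwise reading of $\delta_{a^*}-x$ when the Human component does not update outside the agreement region, which the paper leaves implicit.
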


\begin{proof}
We proceed in five steps.

\textbf{Step 1: Compactness and convexity of the state space.}
Each simplex $\Delta(A_H)$ and $\Delta(A_{AI})$ is compact and convex in the corresponding Euclidean space. Therefore their product $\mathcal X=\Delta(A_H)\times\Delta(A_{AI})$ is also compact and convex.

\textbf{Step 2: Regime partition induced by the delegation rule.}
By assumption, the delegation rule is determined by threshold comparisons of the continuous map $D_{JS}(x)$. Hence the sets corresponding to the threshold equalities are closed in $\mathcal X$, and the strict-inequality regions are open in the relative topology of $\mathcal X$. Therefore the delegation rule partitions $\mathcal X$ into open regime interiors separated by closed threshold surfaces.

\textbf{Step 3: Regularity of the regime-wise drift.}
Fix a regime $\rho\in\{\mathrm{agree},\mathrm{context},\mathrm{disagree}\}$. By assumption, $h_\rho$ is Lipschitz on the closure of that regime, hence in particular it is locally Lipschitz on the regime interior. Since the closure of each regime is a closed subset of the compact space $\mathcal X$, it is compact. A Lipschitz map on a compact set is bounded. Therefore each $h_\rho$ is bounded on its regime closure.

\textbf{Step 4: Piecewise regularity of the overall drift.}
By definition, the overall drift $h$ agrees with $h_\rho$ on each regime $\rho$. Hence $h$ is Lipschitz within each regime interior. Across threshold surfaces, the formula for the drift may switch from one regime to another, so global Lipschitz continuity need not hold. Nevertheless, because only finitely many regimes are present and each $h_\rho$ is bounded on its regime closure, the overall drift $h$ is locally bounded on $\mathcal X$. Thus $h$ is piecewise Lipschitz and locally bounded.

\textbf{Step 5: Noise decomposition and invariance of the state space.}
Write the recursion in the form
$x_{T+1}=x_T+\eta_T H(x_T,s_T) =x_T+\eta_T\big(h(x_T)+M_{T+1}\big)$, where
$M_{T+1}:=H(x_T,s_T)-h(x_T)$.

By construction, $\mathbb E[M_{T+1}\mid \mathcal F_T]=0$,
so $(M_{T+1})$ is a martingale-difference noise sequence. Since $H$ is uniformly bounded by assumption and $h$ is bounded by Steps 3--4, the noise sequence is also uniformly bounded.

Finally, because $H(x_T,s_T)=\delta_{a^*(x_T,s_T)}-x_T$, the update can be rewritten as
$x_{T+1}=(1-\eta_T)x_T+\eta_T\,\delta_{a^*(x_T,s_T)}$.
For every $0\le \eta_T\le 1$, this is a convex combination of two points in $\mathcal X$: the current policy $x_T\in\mathcal X$ and the pure action profile $\delta_{a^*(x_T,s_T)}\in\mathcal X$. Since $\mathcal X$ is convex, it follows that $x_{T+1}\in\mathcal X$. By induction, all iterates remain in $\mathcal X$.

This proves all claims.
\end{proof}

\begin{remark}[Justification via stochastic approximation]
\label{justification_v_sthicastic}
Under Assumptions~\ref{ass:lambda} (iii) and \ref{eta_assumption}, the coupled policy recursion $(p_H(T),p_{AI}(T))$ is a stochastic approximation scheme with piecewise-Lipschitz drift on a compact state space.
By standard ODE-method results \cite{Benaim1999,KushnerYin2003}, the associated limiting dynamics are governed by the differential inclusion induced by the piecewise-Lipschitz mean-field drift (Proposition~\ref{lem:piecewise_drift_common}).
\end{remark}

In the constant-step regime, ergodicity implies
\[
f_{\mathrm{agree}}(T)
\xrightarrow[T\to\infty]{a.s.}
f_{\mathrm{agree}}^{\ast}
=
\mathbb{E}_{\mu}
\!\left[
\mathbf{1}\!\left\{
D_{JS}^{t} \le \alpha_{\mathrm{agree}}
\right\}
\right],
\]
where $\mu$ is the invariant distribution of the induced Markov chain. In the diminishing-step regime, if $D_{JS}^{t}\to D_{JS}^{\infty}$ almost surely, then by Cesàro convergence,
\[
f_{\mathrm{agree}}(T)
\xrightarrow[T\to\infty]{a.s.}
\mathbf{1}\!\left\{
D_{JS}^{\infty} \le \alpha_{\mathrm{agree}}
\right\}.
\]
Analogous limits hold for $f_{\mathrm{ctx}}(T)$ and $f_{\mathrm{disagree}}(T)$.

\textit{Interpretation.}
If $f_{\mathrm{agree}}^{\ast}=1$, the equilibrium is
AI-dominant (persistent alignment).
If $f_{\mathrm{agree}}^{\ast}=0$, it is Human-dominant (persistent divergence).
Intermediate values correspond to an adaptive equilibrium in which execution regimes stabilize in frequency rather than collapsing to a single regime.

This notion of equilibrium parallels learning-in-games models \cite{Fudenberg1998,Benaim1999,KushnerYin2003}, where repeated interaction leads to stabilization of empirical behavior rather than instantaneous best-response fixed points.

The following proposition formalizes equilibrium as long-run stabilization of the empirical regime frequencies.

\begin{theorem}[Frequency convergence equilibrium]\label{prop:freqconv}
Let
\[
f_{\mathrm{agree}}(T)
:=
\frac{1}{T}\sum_{t=1}^{T}
\mathbf{1}\{D^t_{JS}\le \alpha_{\mathrm{agree}}\}
\]
denote the empirical occupation frequency of the agreement region along the divergence trajectory $\{D^t_{JS}\}_{t\ge1}$, as defined in Equation~\eqref{eq:f_agree}. We distinguish two learning regimes: constant-gain updates yield ergodic time-average convergence, whereas Robbins--Monro step sizes yield almost-sure iterate convergence.

\textit{i) Constant step size.} Assume $\eta_T\equiv \eta>0$ then the joint policy recursion induces a time-homogeneous Markov chain on the compact policy space; convergence then holds in distribution and in ergodic occupation averages.
If this Markov chain is irreducible and aperiodic and admits a unique invariant probability measure $\mu$, then the induced divergence process 
$\{D^t_{JS}\}_{t\ge1}$ satisfies Assumption~\ref{ass:freqconv}, because
$D_{JS}$ is a measurable function of the policy state and thus inherits ergodicity from the policy Markov chain under the same invariant measure $\mu$.

\[
f_{\mathrm{agree}}(T)
\xrightarrow[T\to\infty]{a.s.}
\mathbb{E}_{\mu}
\!\left[
\mathbf{1}\{D_{JS}\le \alpha_{\mathrm{agree}}\}
\right]
\]

\textit{ii) Diminishing step size.} Assume $\{\eta_T\}$ satisfies Assumption~\ref{eta_assumption} (see \cite{Benaim1999,KushnerYin2003}), then the coupled policy recursion is a Robbins--Monro stochastic approximation scheme and its limit set is almost surely contained in the internally chain-transitive invariant set of the associated mean-field ODE \cite{Benaim1999,KushnerYin2003}.
\[
f_{\mathrm{agree}}(T)
\xrightarrow[T\to\infty]{a.s.}
\mathbf{1}\{D^\infty_{JS}\le \alpha_{\mathrm{agree}}\}.
\]
\end{theorem}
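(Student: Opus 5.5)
The argument splits along the two step-size regimes in the statement. In both, the plan is to view the indicator $g(x):=\mathbf{1}\{D_{JS}(x)\le\alpha_{\mathrm{agree}}\}$ as a bounded measurable function on the compact joint policy space $\mathcal X$ of Proposition~\ref{lem:piecewise_drift_common}. Then $f_{\mathrm{agree}}(T)=\frac1T\sum_{t=1}^T g(x_t)$ is a time average along the policy process $\{x_t\}$. Measurability of $g$ is immediate: $D_{JS}$ is continuous on $\mathcal X$ by item (iii) of that proposition, so $\{D_{JS}\le\alpha_{\mathrm{agree}}\}$ is a closed set (Step 2 of its proof).

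\textit{Constant step size.} First I would record, as in the discussion preceding Assumption~\ref{eta_assumption}, that with $\eta_T\equiv\eta$ the update $x_{T+1}=(1-\eta)x_T+\eta\,\delta_{a^\ast(x_T,s_T)}$ is determined by $(x_T,s_T)$ and fresh randomness. So $\{x_T\}$, or more precisely the augmented chain $\{(x_T,s_T)\}$, is a time-homogeneous Markov chain on the compact set $\mathcal X\times S$. This chain stays in $\mathcal X$ by item 5 of Proposition~\ref{lem:piecewise_drift_common}. Under the hypothesised irreducibility, aperiodicity and uniqueness of the invariant law $\mu$, I would invoke the ergodic theorem for Markov chains (Birkhoff's theorem for the stationary chain, extended to arbitrary initial conditions via Harris recurrence) for the bounded function $g$. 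This gives $\frac1T\sum g(x_t)\to\mathbb E_\mu[g]$ almost surely. Since $D^t_{JS}=D_{JS}(x_t)$ is a measurable image of the chain, it inherits this ergodic behaviour, which is exactly the content claimed under Assumption~\ref{ass:freqconv}.

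\textit{Diminishing step size.} Under Assumption~\ref{eta_assumption}, Proposition~\ref{lem:piecewise_drift_common} provides the Robbins--Monro form $x_{T+1}=x_T+\eta_T(h(x_T)+M_{T+1})$ with bounded martingale noise, a compact invariant domain, and a piecewise-Lipschitz, locally bounded drift. I would pass to the Filippov set-valued map $F(x)=\bigcap_{\epsilon>0}\overline{\mathrm{co}}\,h(B_\epsilon(x))$, which is upper semicontinuous with compact convex values. The Bena\"im--Hofbauer--Sorin theorem then says that the interpolated process is almost surely an asymptotic pseudotrajectory, and that its limit set is internally chain transitive for $\dot x\in F(x)$. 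If this set is a singleton $\{x_\infty\}$, then $x_t\to x_\infty$ and, by continuity of $D_{JS}$, $D^t_{JS}\to D^\infty_{JS}:=D_{JS}(x_\infty)$. I would then apply Cesàro averaging to $g(x_t)$ to conclude $f_{\mathrm{agree}}(T)\to\mathbf 1\{D^\infty_{JS}\le\alpha_{\mathrm{agree}}\}$.

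The main obstacle is this last Cesàro step. The indicator is discontinuous exactly at the threshold, so $g(x_t)\to g(x_\infty)$ is guaranteed only when $D^\infty_{JS}<\alpha_{\mathrm{agree}}$ or $D^\infty_{JS}>\alpha_{\mathrm{agree}}$. I would first handle these two strict cases: eventually $D^t_{JS}$ lies strictly on one side, so the indicator is eventually constant. For the boundary case $D^\infty_{JS}=\alpha_{\mathrm{agree}}$, the stated limit can fail because the iterates may chatter across the threshold. I would therefore either add the non-degeneracy hypothesis $D^\infty_{JS}\neq\alpha_{\mathrm{agree}}$, or weaken the conclusion there to $\limsup_T f_{\mathrm{agree}}(T)\le 1$ together with the closed-set inequality $\limsup_t g(x_t)\le g(x_\infty)$. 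The same argument, applied with the respective indicators, gives the analogous statements for $f_{\mathrm{ctx}}$ and $f_{\mathrm{disagree}}$.
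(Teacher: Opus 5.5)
Your proposal matches the paper's proof. In (i) it applies the ergodic theorem for the policy Markov chain to the bounded indicator; in (ii) it uses almost-sure convergence of $D^t_{JS}$, eventual constancy of the indicator away from $\alpha_{\mathrm{agree}}$, and Ces\`aro averaging. Your refinements are sound (the augmented chain $(x_T,s_T)$, the Filippov/Bena\"im--Hofbauer--Sorin justification, and the non-degeneracy condition $D^\infty_{JS}\neq\alpha_{\mathrm{agree}}$, which the paper uses only tacitly through ``continuity of the indicator away from the threshold''), though the paper needs only $D^t_{JS}\to D^\infty_{JS}$ rather than a singleton limit set, and your ``weakened'' boundary conclusion is vacuous (both inequalities are trivial there), so the non-degeneracy hypothesis is the right fix.
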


\begin{proof}
(i) Under constant step size, the policy recursion defines
a time-homogeneous Markov chain on a compact state space.
By Assumption~\ref{ass:freqconv}, this chain admits a unique invariant distribution $\mu$.
Since the indicator is bounded, the ergodic theorem yields
\[
\frac{1}{T}\sum_{t=1}^{T}
\mathbf{1}\{D^t_{JS}\le \alpha_{\mathrm{agree}}\}
\xrightarrow[T\to\infty]{a.s.}
\mathbb{E}_{\mu}
\!\left[
\mathbf{1}\{D_{JS}\le \alpha_{\mathrm{agree}}\}
\right].
\]

(ii) Under Assumption~\ref{eta_assumption}, the recursion is a standard Robbins--Monro stochastic approximation scheme.
If $D^t_{JS}\to D^\infty_{JS}$ almost surely, then by continuity of the indicator away from the threshold,
\[
\mathbf{1}\{D^t_{JS}\le \alpha_{\mathrm{agree}}\}
\xrightarrow{a.s.}
\mathbf{1}\{D^\infty_{JS}\le \alpha_{\mathrm{agree}}\}.
\]
Therefore Ces\`aro convergence implies
\[
f_{\mathrm{agree}}(T)
\to
\mathbf{1}\{D^\infty_{JS}\le \alpha_{\mathrm{agree}}\}
\quad a.s.
\]
\end{proof}

If $f_{\mathrm{agree}}(s)=1$, the equilibrium is \emph{AI-dominant}, meaning that policies remain within the agreement band. If $f_{\mathrm{agree}}(s)=0$, it is \emph{Human-dominant}, corresponding to persistent divergence. Intermediate values define an \emph{adaptive} equilibrium. This interpretation parallels learning-in-games models \cite{Fudenberg1998,Benaim1999,KushnerYin2003}. Theorem~\ref{prop:freqconv} should therefore be read as a dynamic stabilization result for empirical regime frequencies, not as a uniqueness theorem or a strategic optimality result \cite{KushnerYin2003,MeynTweedie2009,BenaimHofbauerSorin2005,Borkar2008}.

\subsubsection{Divergence equilibrium}

At the divergence level, equilibrium is inherited from policy stabilization. Since $D_{JS}^{T}=D_{JS}\!\big(p_H(a\mid s_T,T),p_{AI}(a\mid s_T,T)\big)$ is a function of the empirical policies, asymptotic stabilization of the policy process induces stabilization of the divergence sequence.

\begin{assumption}[Statistical regularity of the divergence process]
\label{ass:freqconv}
The divergence sequence $\{D_{JS}^{t}\}_{t\ge1}$ generated by the policy updates satisfies the following asymptotic regularity conditions:

\begin{enumerate}[label=(\roman*),itemsep=0.2em,topsep=0.1em]
\item The process $\{D_{JS}^{t}\}_{t\ge1}$ is asymptotically stationary and ergodic, with finite variance.
\item The indicator variable $\mathbf{1}\{D_{JS}^{t}\le\alpha_{\mathrm{agree}}\}$
is bounded and integrable.
\end{enumerate}
\end{assumption}

Assumption~\ref{ass:freqconv} is stated at the level of the divergence process $\{D^t_{JS}\}_{t\ge1}$ and applies to justify frequency convergence in the constant step-size regime considered in Theorem~\ref{prop:freqconv}, where stabilization occurs in distribution rather than almost surely. Under constant step size, irreducibility and aperiodicity of the joint policy Markov chain yield asymptotic stationarity and ergodicity of the induced divergence process. Under diminishing step sizes, the policy process is nonstationary during the transient phase, but the induced divergence sequence may satisfy the same asymptotic regularity once $\eta_T$ vanishes and the policies stabilize.


\section{\label{sec:FirstSC} Scenario 1: Human arbitration.}

This section develops the first regime of Neo-Game Theory: Human arbitration. We specify the delegation mechanism for this scenario and describe how it is implemented in the simulations, including the contextual rule $\lambda_{\mathrm{ctx}}$ used in the intermediate delegation region.


\subsection{Description}

 The regime is termed \emph{Human arbitration} because the Human retains ultimate authority over execution. Full disagreement ($D_{JS} \ge \alpha_{\mathrm{disagree}}$) enforces Human execution whereas AI autonomy occurs only under sufficient policy alignment ($D_{JS} \le \alpha_{\mathrm{agree}}$). 
 
 Within the contextual region ($\alpha_{\mathrm{agree}} < D_{JS} < \alpha_{\mathrm{disagree}}$), execution is governed by the simulated contextual rule defined in Equation (\ref{lambda_context}). This simulation-based mechanism models the intermediate zone where real Human choice cannot be specified \emph{ex-ante}.
Delegation is calibrated through empirical policy agreement. This scenario models the \emph{teacher-forcing} stage of hybrid learning and is designed to capture:
\begin{itemize}
    \item \textbf{Asymmetric trust and authority:} the AI adapts while the Human retains structural override authority.
    
    \item \textbf{Reward sensitivity to alignment:} improvement is measured through reduced $D_{JS}$.
   
     \item \textbf{Adaptive delegation:} control transitions across agreement, contextual, and disagreement regimes based on empirical policy divergence.

\end{itemize}
The architecture supports human-in-the-loop training, early-deployment adaptation, and trust-sensitive systems.

\subsection{Delegation mechanism}
\label{delegation_sc1}

We refer to the resulting delegation contexts (agreement, contextual, and disagreement) as execution regimes, since each determines how the delegation variable $\lambda_T(s)$ is selected and therefore which agent executes the action at time $T$.

\begin{definition}[Two-threshold entropy-based delegation]
\label{twothreshold}
Let $p_H(\cdot \mid s,T)$ and $p_{AI}(\cdot \mid s,T)$ denote the Human and AI policies at state $s$ and time $T$, defining the Jensen--Shannon divergence at instant $T$ as in Equation(\ref{D_JS_T})

Let two thresholds satisfy $0 \le \alpha_{\mathrm{agree}} < \alpha_{\mathrm{disagree}} \le 1$. The delegation variable $\lambda_T(s) \in \{0,1\}$ is the scenario-specific form of the general delegation rule in Equation~(\ref{eq:lambda_rule}):

\begin{equation}
\label{eq:lambda_case1}
\lambda_T(s) =
\begin{cases}
0 & \text{if}\; D_{JS}^T  \le \alpha_{\mathrm{agree}},\\[4pt]
\lambda_{\mathrm{ctx}}(D^T_{JS}) & \text{if}\; \alpha_{\mathrm{agree}} < D_{JS}^T  < \alpha_{\mathrm{disagree}},\\[4pt]
1 & \text{if}\; D_{JS}^T  \ge \alpha_{\mathrm{disagree}},
\end{cases}
\end{equation}
where $\lambda_{ctx}$ is $\lambda_T(s)$ in the contextual zone, defined in Equation~\eqref{lambda_context}. 
\end{definition}

In the contextual region described below, execution is determined by $\lambda_{\mathrm{ctx}}(D_{JS}^T)$. Since real Human choice is not observable \emph{ex ante} in this intermediate zone, the contextual decision is modeled probabilistically for implementation purposes.

The delegation mechanism requires a bounded and continuous discrepancy statistic in order to partition the policy space into agreement, contextual, and disagreement regions. We use the Jensen--Shannon divergence which makes the thresholds directly interpretable (Section~\ref{sec:Entropy}).

\emph{Coalition preference is hierarchical}: the threshold rule selects the execution regime before utilities are compared within that regime. Consequently, no compensatory trade-off exists between outcomes belonging to different execution regimes. This induces a non-Archimedean (lexicographic) structure (Section~\ref{non_archi}) at the coalition level, even though $U_H$ and $U_{AI}$ are individually real-valued. 

Outside the contextual region, the realized utility coincides with the corresponding executor’s utility. Thus, Equation~(\ref{eq:reward_realized}) provides the realized within-regime utility, while cross-regime comparison follows the lexicographic rule in Equation~\eqref{eq:lex_order}.

Within the contextual region, $D_{JS}^T$ is interpreted as the probability of Human execution, while $1-D_{JS}^T$ is the probability of AI execution. Thus higher divergence increases the likelihood of Human intervention. Let $\kappa\sim\mathrm{Uniform}(0,1)$; then

\begin{equation}
\label{lambda_context}  
\lambda_{\mathrm{ctx}} =
\begin{cases}
1 & \text{if } \kappa < D_{JS}^T,\\
0 & \text{otherwise}.
\end{cases}
\end{equation}
i.e. $\Pr(\lambda = 1) = \Pr(\kappa < D_{JS}^T) = D_{JS}^T$
and $\Pr(\lambda = 0) = 1 - D_{JS}^T$. 


\subsection{Implementation}
\label{sec:Implementation}

This subsection describes the implementation of Scenario~1 whose base-case pseudocode is reported in Appendix~\ref{pseudocode}.

\vspace{0.2cm}

a) When \texttt{random\_states=False} (\texttt{rs=F}), state transitions follow the action-dependent kernel $P(\cdot\mid s_t,a_t^\ast)$ described in Section~\ref{virtualnature}. This is the Markov regime: the environment is governed by the transition kernel rather than by an exogenous state distribution. In this case, no single state-independent probability $p_A$ exists, because the probability of reaching state $A$ depends jointly on the current state $s_t$ and the executed action $a_t^\ast$. 

\vspace{0.2cm}

b) When \texttt{random\_states=True} (\texttt{rs=T}), $\mathcal{V}$ generates states exogenously, independently of the executed action. This specification concerns only state generation. The learning regime is determined separately by the step-size schedule: constant $\eta$ yields a constant-gain update, whereas diminishing $\eta_T$ yields a Robbins--Monro update where
\[
\Pr(s_{t+1}=A)=p_A, \;\; \Pr(s_{t+1}=B)=1-p_A.
\]
These runs therefore evaluate the delegation--learning dynamics under controlled state frequencies and do \emph{not} use an action-dependent transition kernel $P(\cdot\mid s,a)$.

\vspace{0.2cm}
c) The \textbf{realised coalition reward} depends on the executed action $a_T^\ast$:
\[
r_T^{\text{coalition}}
=
\mathbf{1}\{\lambda_T = 1\}\,\hat{U}_H(s_T,a_T^\ast)
+
\mathbf{1}\{\lambda_T = 0\}\,\hat{U}_{AI}(s_T,a_T^\ast).
\]
This is the empirical reward used in evaluation, as in Equation~\eqref{eq:reward_realized}.

For value computation, the Bellman backup evaluates all candidate actions through the delegation-conditioned stage reward:

\[
\hat V(s_T)
=
\max_{a\in A}
\left\{
r_T(s_T,a)
+
\gamma
\sum_{s'} P_V(s_{T+1}|s_T,a)\,V(s_{T+1})
\right\},
\]
where the stage reward is
$r_T(s_T,a)=\mathbf{1}\{\lambda_T = 1\}\,\hat{U}_H(s_T,a)
+\mathbf{1}\{\lambda_T = 0\}\,\hat{U}_{AI}(s_T,a)$.

\vspace{0.3cm}

Table~\ref{Table1} reports the initialization variables. To avoid notation ambiguity, we distinguish the structural utilities from their online estimates: \(U_H(s,a)\) and \(U_{AI}(s,a)\) denote the underlying Human and AI utilities, whereas \(\hat U_H(s,a)\) and \(\hat U_{AI}(s,a)\) denote the utility traces updated during learning. Accordingly, the online traces are initialized as \(\hat U_H(s,a)=0\) and \(\hat U_{AI}(s,a)=0\) for all state-action pairs.  If not provided, the initial Human and AI policy distributions (marked with an asterisk in Table~\ref{Table1}) default to the symmetric distribution initialized as uniform (Dirichlet(1) smoothing).

\begin{table}[h]
\centering
\captionsetup{labelfont={color=black},textfont={color=black}}
\caption{Initialization variables used in the simulations.}

\begin{tabular}{ll}
\hline
Human initial utility trace & $\hat U_H(s,a) = 0$ \\
AI initial utility trace & $\hat U_{AI}(s,a) = 0$ \\
Initial values & $V(A) = V(B) = 0$ \\
Human policy & $p_H(\cdot \mid s)$ given* \\
AI policy & $p_{AI}(\cdot \mid s)$ given* \\
Initial update frequencies & $f_H(0) = f_{AI}(0) = 0$ \\
Initial states & $s_0 \sim (p_A,p_B)$ or $s_0 = A$ \\
Actions set & $\{a_0,a_1\}$ \\
\hline
\end{tabular}
\label{Table1}
\end{table}
\noindent

\medskip


\textbf{\textit{Inputs.}} Each grid configuration is specified by the delegation thresholds $\alpha_{\mathrm{agree}}$ and $\alpha_{\mathrm{disagree}}$, the step size $\eta$ or $\eta_{\mathrm{decay}}$, the EWMA utility rate $\beta$, and the contextual rule. Outside the contextual region, simulations use deterministic two-threshold delegation as in Definition~\ref{twothreshold}. Within the contextual region, selection is simulated probabilistically.

To assess convergence properties, the algorithm is executed separately for each horizon \(H\in\{200,1000,5000,10000,15000,30000,50000\}\), rather than pooling all horizons within a single grid search. For each fixed \(H\), the full parameter grid reported in Table~\ref{tab:param_grid} is evaluated independently, and the best-performing configuration is selected according to the joint score criterion. Figure~\ref{fig:grid} shows the representative case \(H=5000\), while Table~\ref{tab:param_grid} summarizes the full set of horizons and parameter values used in the experiments.

\begin{table}[H]
\footnotesize
\captionsetup{labelfont={color=black},textfont={color=black}}
\caption{Inputs parameter grid used in the simulations.
$0 \le \alpha_{\mathrm{agree}} < \alpha_{\mathrm{disagree}} \le 1$.}
\label{tab:param_grid}
\centering
\footnotesize
\arrayrulecolor{black}
\setlength{\tabcolsep}{6pt}
\renewcommand{\arraystretch}{1.15}
\begin{tabularx}{\textwidth}{>{\color{black}}l >{\color{black}}X}
\toprule
\textbf{Parameter} & \textbf{Values} \\
\midrule
$H$ & 200, 1000, 5000, 10000, 15000, 30000, 50000 \\
$\alpha_{\mathrm{agree}}$ & 0.20, 0.35, 0.49, 0.70 \\
$\alpha_{\mathrm{disagree}}$ & 0.35, 0.51, 0.85 \\
$\beta$ & 0.02, 0.05, 0.10 \\
$\gamma$ & 0.1, 0.25, 0.5, 0.75, 0.90 \\
\texttt{seed} & 42 \\
\texttt{random\_states} & True, False \\
\texttt{eta\_label} & \texttt{const\_0p05}, \texttt{decay\_0p05\_over\_1p0plus\_0p001t} \\
\texttt{state\_probs} &
(0.7, 0.3), (0.9, 0.1), (0.01, 0.99),\\
& (0.3, 0.7), (0.1, 0.9), (0.99, 0.01) \\
\bottomrule
\end{tabularx}
\end{table}
\FloatBarrier

\textbf{\textit{Outputs.}}
For each configuration, the simulator records:
(i) Jensen--Shannon divergence $D_{\mathrm{JS}}$ (mean and terminal);
(ii) execution frequencies $f_H$, $f_{AI}$, and $f_{\mathrm{overall}}$;
(iii) regime frequencies $(f_{\mathrm{agree}}, f_{\mathrm{ctx}}, f_{\mathrm{disagree}})$;
(iv) ownership metrics (overall and contextual);
(v) terminal values $(D_{\mathrm{JS}}^{\mathrm{final}}, f_{\mathrm{overall}}^{\mathrm{final}})$.

Certain parameter configurations generate irregular or mixed policy trajectories. Because delegation is governed by entropy thresholds and contextual arbitration, the executed action may alternate stochastically between Human and AI control. When this occurs frequently---especially under large learning rates $\eta$ or wide contextual regions---the empirical policies may temporarily diverge or oscillate before eventual stabilization. Such patterns reflect the stochastic delegation dynamics of the hybrid system rather than a failure of the  framework.
  
\subsection{Results}
\label{subsec Results}

We report the simulation results in two steps. First, we isolate the role of the main parameters in shaping transient learning behavior. Second, we summarize the cross-horizon regularities that characterize the asymptotic behavior of the Human arbitration regime.

\subsubsection{\label{sec:Analysis} Parameter analysis}
\label{parameters-analysis}

\textit{(a) \textbf{Delegation thresholds.}} The thresholds $\alpha_{\mathrm{agree}}$ and $\alpha_{\mathrm{disagree}}$ determine how often the process enters agreement, contextual, or disagreement regimes. The learning schedule controls the trade-off between rapid adjustment and oscillatory behavior, while the environmental specification (\texttt{r\_s=True}/\texttt{False}) shapes the short-run variability of state visitation. The exogenous state-sampling pair $(p_A,p_B)$ matters only when \texttt{r\_s=True}; when \texttt{r\_s=False}, state evolution is generated endogenously by $P(\cdot\mid s_T,a_T^\ast)$.  These factors affect transient trajectories but not the long-run alignment pattern. The supplementary material reports \textcolor{blue}{(\href{https://github.com/sepulveda-fontaine-s/Neo_Game_Theory/tree/main}{GitHub}, (accessed on 10 April 2026) \cite{SepulvSupplMat}) }the best-performing configurations for each horizon and shows that the same qualitative interpretation holds across the full grid.

\textit{(b) \textbf{Policy–update rate/schedule}, $\eta$.} The learning rate $\eta$ governs how strongly each executed action updates the policy through the Robbins–Monro rule
$(1-\eta)p+\eta,\delta_{a^*}$.
Large values produce rapid but potentially volatile policy changes, whereas smaller values yield slower but smoother adaptation.
A constant step size leads to fast early alignment but may generate small oscillations because updates never vanish. In contrast, the decaying schedule $\eta_T = 0.05/(1+0.001T)$ satisfies Robbins–Monro conditions and gradually reduces update magnitudes, producing smoother convergence trajectories. Across horizons, constant learning rates accelerate early reductions in $D_{JS}$, while decaying schedules provide greater asymptotic stability.
At sufficiently large horizons, however, both schedules lead to similar equilibrium policies, indicating that $\eta$ mainly governs the \emph{trajectory} of learning rather than the final outcome.

\textit{(c) \textbf{Action-dependent Virtual Nature experiment.}} The environment can evolve either exogenously (\texttt{$r\_s=True$}) or through an action-dependent transition kernel (\texttt{$r\_s=False$}). 
Nevertheless, both regimes ultimately produce similar equilibrium policies once the horizon is sufficiently long.

We retain the two-state structure $S=\{A,B\}$ and 
binary action set $A=\{0,1\}$. 
With complementary probabilities corresponding to remaining in the same state. In this configuration, state visitation is endogenously shaped by the executed action $a_T^*$. Therefore, this experiment evaluates whether the entropy-based delegation dynamics remain qualitatively consistent when Virtual Nature is fully action-dependent.

\textit{(d) \textbf{Utility–credit rate}, $\beta$.} It determines the responsiveness of utility traces through the EWMA update rule.
Higher values increase the speed with which utilities react to executed actions, amplifying early adjustments in delegation frequencies.
Lower values produce smoother but slower adaptation. Across horizons, the influence of $\beta$ is mainly visible in the early stages of learning: larger values accelerate transient adjustments in $D_{JS}^T$ and execution frequencies, while smaller values produce more gradual trajectories. However, the long-run policy configuration remains largely unaffected by $\beta$, indicating that the parameter primarily controls the \emph{timescale} of utility adaptation.

\subsubsection{Cross-iteration synthesis.} 

A summary of results after implementation is shown in the following items. Detailed information and plots about each iteration can be found in
\href{https://github.com/sepulveda-fontaine-s/Neo_Game_Theory/tree/main}{GitHub} \cite{SepulvSupplMat}.

All reported counts aggregate outcomes across the full parameter grid (Table~\ref{tab:param_grid}).

\begin{itemize}

\item $H=200$: \textbf{Early exploratory regime}.
At this short horizon the system remains dominated by transient dynamics. Human and AI policies typically remain misaligned and $D_{JS}$ seldom converges to zero.  When \texttt{r\_s=True}, stochastic state sampling produces higher volatility and slower reductions in divergence, whereas deterministic transitions  (\texttt{r\_s=False}) yield smoother trajectories and slightly faster alignment. Constant $\eta$ generates larger oscillations in policy updates, while decaying $\eta(t)$ produces more stable learning paths but still does not guarantee convergence.

\item $H=1{,}000$: \textbf{Early learning}. 
Initial alignment begins to appear in several configurations. 
With constant $\eta$, divergence often decreases quickly but residual oscillations remain, especially under \texttt{r\_s=True}. 
The Markov environment (\texttt{r\_s=False}) typically reduces $D_{JS}$ faster because state visitation is structured by actions. 
Under decaying $\eta(t)$ trajectories become smoother and more stable, though convergence across both states is not yet systematic.

\item $H=5{,}000$: \textbf{Emerging stabilisation}. 
Most configurations show clear reductions in $D_{JS}$ and increasing similarity between Human and AI policies. Constant $\eta$ produces faster early convergence but may still generate mild oscillations in the stochastic environment. Decaying $\eta(t)$ yields smoother trajectories and more stable alignment across both states. Differences between \texttt{r\_s=True} and \texttt{False} remain visible but become less pronounced. 

\item $H=10{,}000$: \textbf{Near-equilibrium}. 
Most configurations now reach practical convergence. With constant $\eta$, policies generally align across states but small oscillations may persist due to the fixed update step, particularly when \texttt{r\_s=True}, where stochastic state sampling continues to introduce minor fluctuations. Under \texttt{r\_s=False}, the action-dependent transition structure produces smoother trajectories and faster collapse of $D_{JS}$. 
Decaying $\eta(t)$ stabilises the learning dynamics in both environments, yielding near-flat trajectories and consistent policy alignment.

\item $H=15{,}000$: \textbf{Stabilised regime}. 
Human and AI policies coincide across most configurations and $D_{JS}$ is effectively zero. Constant $\eta$ still produces small fluctuations in policy probabilities, particularly in the stochastic environment (\texttt{r\_s=True}), although these do not alter the overall alignment outcome. With \texttt{r\_s=False}, trajectories remain smoother because state visitation is structured by the action-dependent transition kernel. 
Decaying $\eta(t)$ further dampens updates, producing almost stationary trajectories.

\item $H=30{,}000$: \textbf{Robbins--Monro freeze-out}. 
At this horizon all four regimes approach stable policy configurations. 
With constant $\eta$, both \texttt{r\_s=True} and \texttt{r\_s=False} reach strong policy alignment across states, although very small oscillations in policy probabilities may persist because the learning step does not vanish. 
Under decaying $\eta(t)$ the Robbins--Monro schedule effectively freezes the updates, producing nearly flat trajectories in both stochastic and Markov environments. In all cases the $D_{JS}$  reaches almost zero, indicating practical convergence of Human and AI policies across states.

\item $H=50{,}000$: \textbf{Saturation}. 
The system reaches a saturated regime in which both learning schedules produce nearly identical outcomes. With decaying $\eta(t)$ trajectories remain flat at their limiting values, while constant $\eta$ may generate extremely small residual fluctuations. The distinction between \texttt{r\_s=True} and \texttt{r\_s=False} becomes negligible, as policies are fully aligned and $D_{JS}$ remains effectively zero across states.

\end{itemize}
\noindent
Figure~\ref{fig:grid} illustrates the dynamics at this horizon, where the learning process is clearly visible before full asymptotic saturation. The trajectories show the contraction of $D_{JS}$ together with the convergence of the Human and AI policies $p_H$ and $p_{AI}$.

\begin{figure}[H]
\centering

\subfloat[$D_{\mathrm{JS}}$ rapidly decreases toward zero, indicating fast alignment between Human and AI policies. Human components stabilize early, while the AI policy displays a longer transient adjustment before convergence. Constant learning yields small residual fluctuations but does not prevent stabilization. $\alpha_{\mathrm{agree}}=0.2$, $\alpha_{\mathrm{disagree}}=0.85$, $\beta=0.02$, $(p_A,p_B)$ not applicable, $\gamma=0.1$.]{%
\begin{minipage}[b]{0.49\linewidth}
\centering
$D_{\mathrm{JS}}$ \& Policies with r\_s=False, $\eta_c=0.05$\par\smallskip
\includegraphics[width=\linewidth]{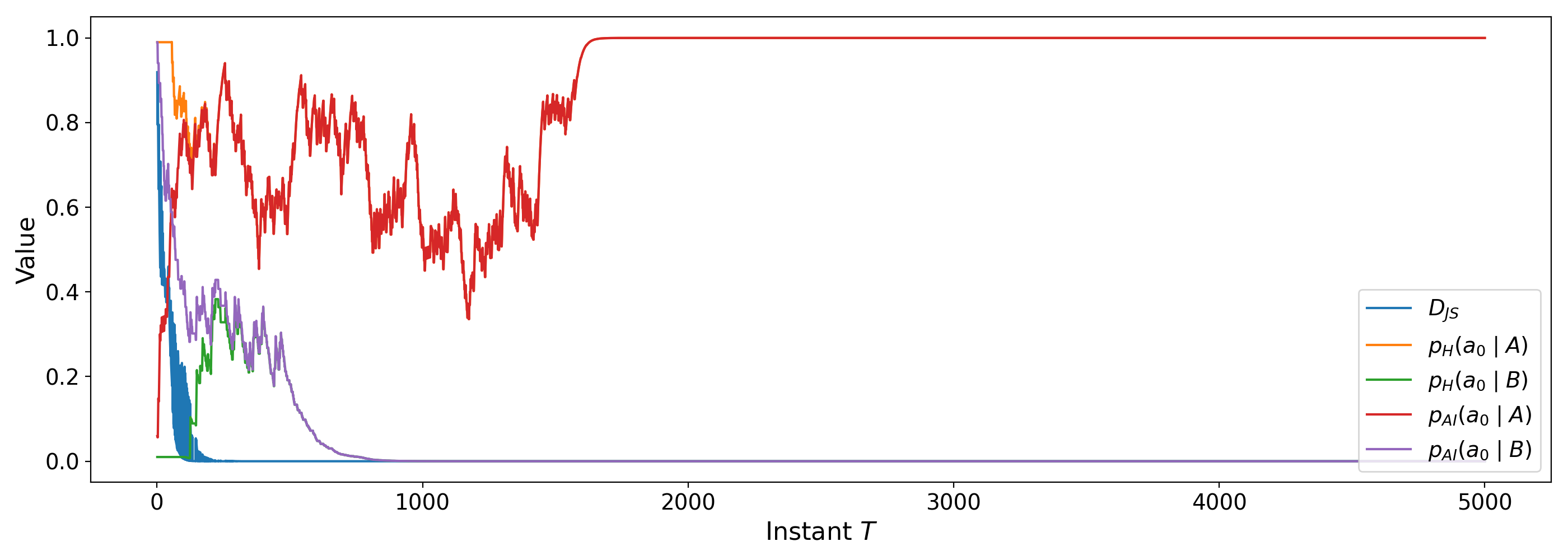}
\end{minipage}}
\hfill
\subfloat[$D_{\mathrm{JS}}$ declines rapidly in the early iterations, indicating fast alignment between Human and AI policies. Stochastic state realizations and the constant learning rate generate visible transient fluctuations, but the policies still converge across states. $\alpha_{\mathrm{agree}}=0.2$, $\alpha_{\mathrm{disagree}}=0.35$, $\eta=0.05$, $\beta=0.02$, $p_A=0.7$, $p_B=0.30$, $\gamma=0.1$.]{%
\begin{minipage}[b]{0.49\linewidth}
\centering
$D_{\mathrm{JS}}$ \& Policies with r\_s=True, $\eta_c=0.05$\par\smallskip
\includegraphics[width=\linewidth]{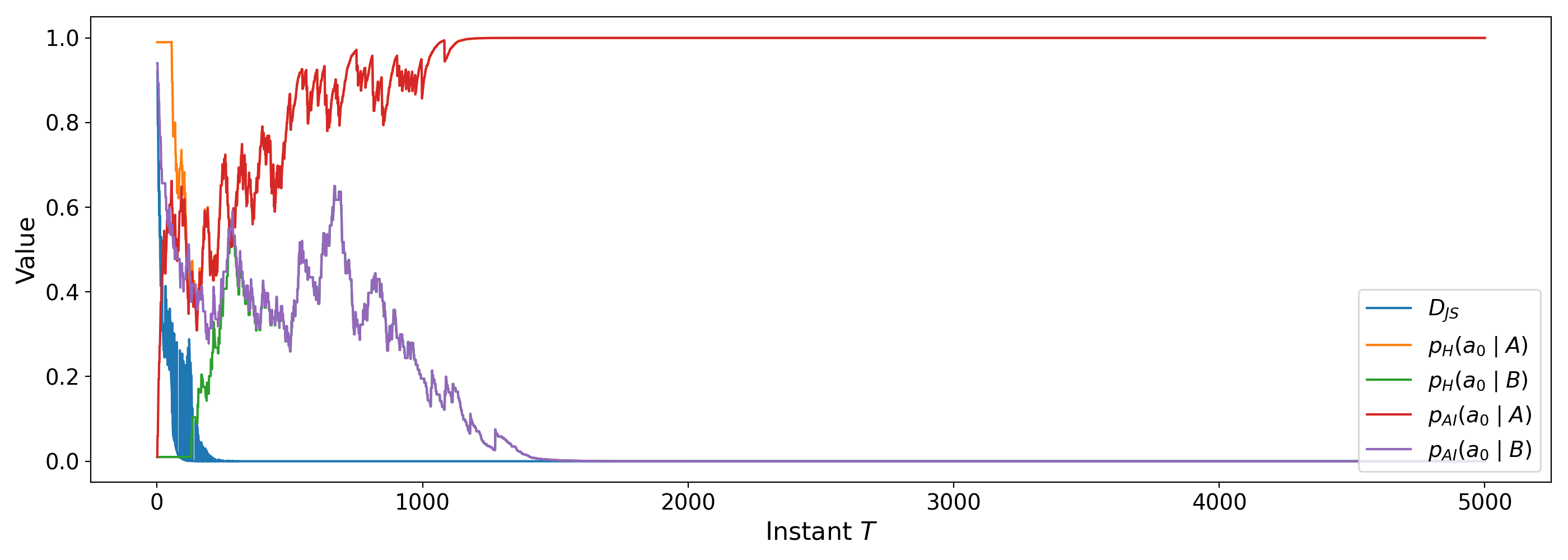}
\end{minipage}}
\par\medskip
\subfloat[The Robbins--Monro schedule dampens early oscillations while $D_{\mathrm{JS}}$ collapses rapidly, indicating alignment of Human and AI policies under action-dependent transitions. $\alpha_{\mathrm{agree}}=0.35$, $\alpha_{\mathrm{disagree}}=0.51$, $\eta(t)=0.05/(1+0.001t)$, $\beta=0.02$, $(p_A,p_B)$ not applicable, $\gamma=0.1$.]{%
\begin{minipage}[b]{0.49\linewidth}
\centering
$D_{\mathrm{JS}}$ \& Policies with r\_s=False, $\eta(t)=0.05/(1+0.001t)$\par\smallskip
\includegraphics[width=\linewidth]{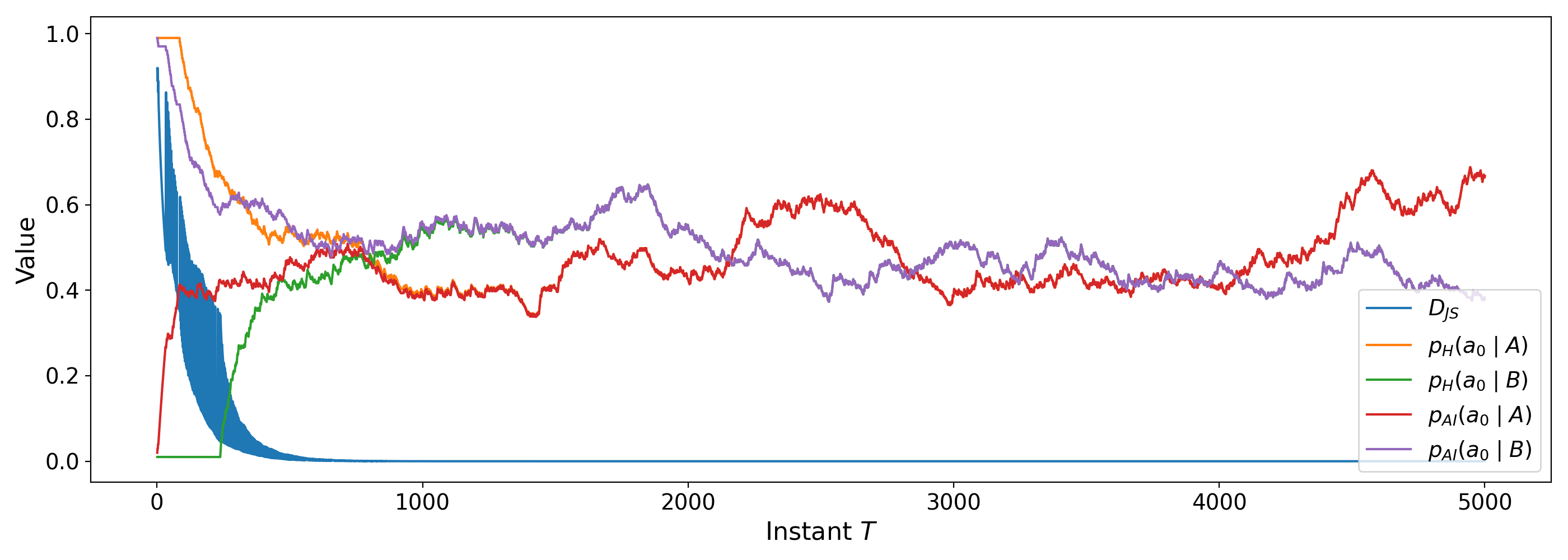}
\end{minipage}}
\hfill
\subfloat[The Robbins--Monro schedule reduces early oscillations while $D_{\mathrm{JS}}$ collapses rapidly, indicating alignment of Human and AI policies under stochastic state generation. $\alpha_{\mathrm{agree}}=0.35$, $\alpha_{\mathrm{disagree}}=0.51$, $\beta=0.02$, $p_A=0.30$, $p_B=0.70$, $\gamma=0.1$.]{%
\begin{minipage}[b]{0.49\linewidth}
\centering
$D_{\mathrm{JS}}$ \& Policies with r\_s=True, $\eta(t)=0.05/(1+0.001t)$\par\smallskip
\includegraphics[width=\linewidth]{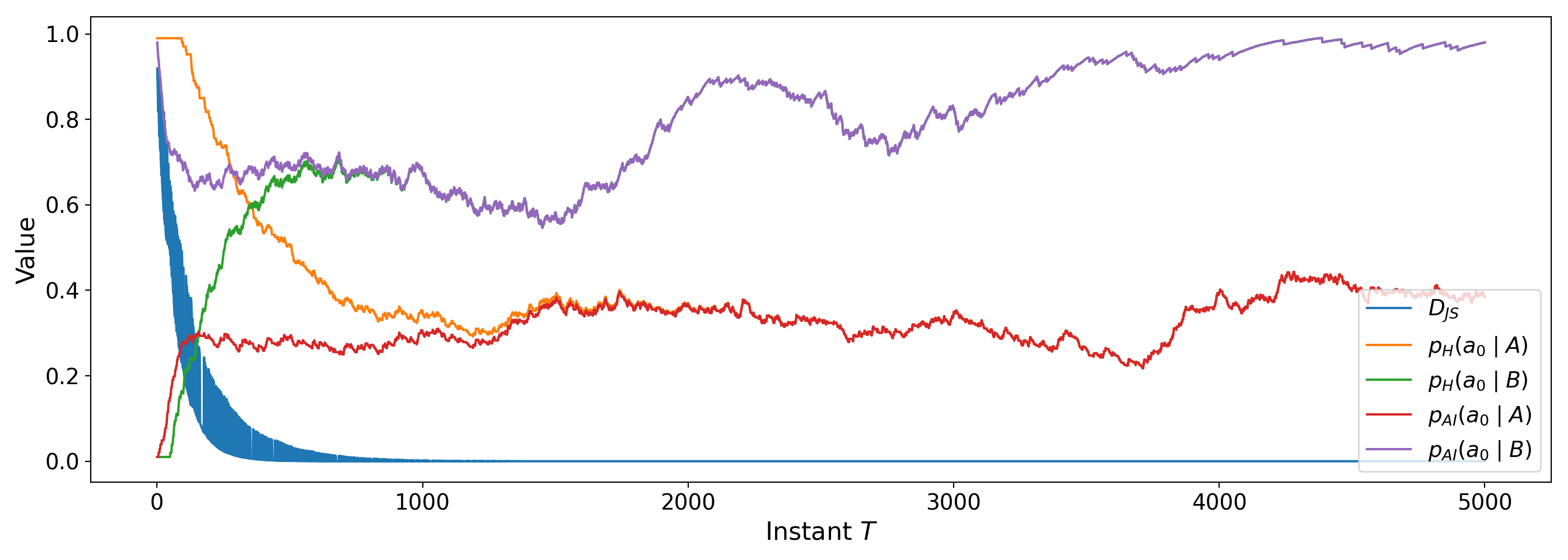}
\end{minipage}}

\caption{Each panel shows the time evolution of the Jensen--Shannon divergence $D_{\mathrm{JS}}$ together with the state-conditional policy probabilities $p_H(a_0\mid s)$ and $p_{AI}(a_0\mid s)$. The trajectories correspond to representative runs with minimal final divergence within each regime. Across all configurations, $D_{\mathrm{JS}}$ contracts rapidly and the Human and AI policies progressively align, illustrating the robustness of the delegation dynamics under different learning schedules and environmental specifications.}
\label{fig:grid}
\end{figure}
\FloatBarrier

\section{Discussion}
\label{sec:Discussion}

This work establishes the operational foundations of Neo-Game Theory and the Human arbitration regime, the Scenario 1 discussed in Section \ref{sec:FirstSC}. In this scenario, Human retains structural authority, while the AI adapts through observation, empirical frequency tracking, and repeated exposure to executed actions under $\mathcal{V}$.
This stable Human--AI alignment can emerge without classical best-response optimization. The result is a dynamic form of stabilization: Jensen--Shannon divergence $D_{JS}$ contracts, the policies align, and the occupation frequencies of the delegation regimes become stable over time.

The simulations also show that the principal role of the parameters is transient rather than asymptotic. Indeed, constant learning rates produce faster but noisier adjustment; decaying learning rates reduce oscillations; exogenous state generation increases early variability; and action-dependent transitions produce more structured short-run paths. Yet, in all these configurations, the long-run pattern remains robust: the Human and AI policies converge toward alignment and the divergence approaches zero. Across all horizons, the main parameters affect the speed and stability of learning more than the existence of convergence itself. At short horizons, the learning schedule and environmental specification produce visibly different trajectories: stochastic state generation increases volatility, whereas action-dependent transitions yield smoother paths; constant $\eta$ accelerates adaptation but may induce oscillations, whereas Robbins--Monro decay stabilizes the trajectories.
As the horizon increases, these differences progressively diminish. Under both environmental regimes and both learning schedules, Human and AI policies align across states and $D_{JS}$ approaches zero. Decaying step sizes suppress residual fluctuations more effectively, while constant step sizes may preserve small oscillations without altering the limiting pattern.

The aforementioned distinction helps situate the Human--AI arbitration regime within the broader logic of Neo-Game Theory and highlights the theoretical importance of its transient dynamics.

\section{Conclusion}
\label{sec:Conclusion}

This paper introduced Neo-Game Theory as a framework for hybrid Human--AI coalitions operating under Virtual Nature $\mathcal{V}$. In the Human arbitration regime, the only one studied in the present work, delegation is governed by the Jensen--Shannon divergence $D_{JS}$, the thresholds $\alpha_{\mathrm{agree}}$ and $\alpha_{\mathrm{disagree}}$, and the delegation factor $\lambda_T(s)$. The resulting equilibrium concept is dynamic and frequency-based rather than a static best-response solution.

The numerical results reported in Section \ref{subsec Results} confirm that convergence in the Human arbitration regime does not rely on classical best-response optimization. Instead, alignment emerges from repeated execution, empirical adaptation, and the statistical structure of the delegation rule. Therefore, stability has a behavioral interpretation: Human and AI policies remain aligned and the divergence remains near zero over long horizons.

Overall, our results indicate that the \textbf{entropy-based delegation mechanism robustly drives policy alignment}: parameter choices shape transient learning paths but do not materially alter the long-run outcome once sufficient iterations are allowed.

Future work will develop the remaining regimes of Neo-Game Theory, namely, AI-control and Negotiation. These extensions will allow systematic comparisons with established solution concepts such as Nash, Stackelberg, and correlated equilibrium, with the aim of developing a unified taxonomy of Human–AI equilibria in digitally mediated environments governed by Virtual Nature.

\vspace{6pt} 




\textbf{Data Availability Statement:} No new data were created or analyzed in this study.


\textbf{Conflicts of Interest:} The authors declare no conflicts of interest.

\clearpage

\appendix

\section{Glossary}
\label{Glossary}

The table below summarizes the main notation used in this work.

\begin{table}[h!]
\centering
\caption{Notation and key variables in the hybrid Human--AI equilibrium framework.}
\begin{tabular}{ll}
\toprule
\textbf{Symbol} & \textbf{Description} \\
\midrule
$s$ & Environment or system state at a given time step \\
$s_T$ & Realized state of the system at decision time $T$ \\
$a_{H,T},\,a_{AI,T}$ & Actions proposed by the Human and the AI at time $T$ \\
$a_T^\ast$ & Executed action after the delegation rule (Definition~\ref{twothreshold}) \\
$A_H,\,A_{AI}$ & Action sets of the Human and AI agents \\
$A$ & Executed action space, with $A = A_H \cup A_{AI}$ \\
$U_H(s,a),\,U_{AI}(s,a)$ & Structural utility functions of the Human and AI agents \\
$\widehat U_H^{T}(s,a),\,\widehat U_{AI}^{T}(s,a)$ & Online utility estimates (EWMA traces) used for learning updates \\
$U_{\mathrm{joint}}(\lambda_T(s),s_T,a_T^\ast)$ & Coalition utility induced by the executed action (Equation~\eqref{eq:Ujoint}) \\
$r(s,a;\lambda_T(s))$ & \parbox[t]{0.55\textwidth}{Scalar reward entering the Bellman recursion, determined by the executing agent (Equation~\eqref{rew(s)})} \\
$\lambda_T(s)$ & Delegation selector at time $T$ indicating which agent executes the action \\
$\alpha_{\mathrm{agree}}$ & Lower Jensen--Shannon divergence threshold defining the agreement regime \\
$\alpha_{\mathrm{disagree}}$ & Upper divergence threshold defining the disagreement regime \\
$\beta$ & EWMA learning rate used in the utility update rule \\
$\eta_T$ & Robbins--Monro step size for policy updates \\
$p_H(a\mid s,T),\, p_{AI}(a\mid s,T)$ & Empirical (time-$T$) Human and AI policies \\
$\pi_H(\cdot\mid s),\, \pi_{AI}(\cdot\mid s)$ & Limiting stationary policies, defined as $\pi_i(\cdot\mid s)=\lim_{T\to\infty} p_i(\cdot\mid s,T)$ \\
$D_{JS}\!\big(p_H(\cdot\mid s,T)\,\|\,p_{AI}(\cdot\mid s,T)\big)$ & Jensen--Shannon divergence between Human and AI policies \\
$f_{\mathrm{agree}}$ & Long-run frequency of the agreement regime \\
$f_{\mathrm{ctx}}$ & Long-run frequency of the contextual regime \\
$f_{\mathrm{disagree}}$ & Long-run frequency of the disagreement regime \\
$P(\cdot \mid s,a)$ & Transition kernel of Virtual Nature governing state evolution \\
$S_T$ & Public entropy-based signal used to evaluate policy divergence \\
$T$ & Discrete decision-time index \\
$q_s(a,b)$ & Empirical action--reward signature distribution at state $s$ \\
$\delta_a$ & Dirac distribution concentrated on action $a$ (used in policy updates) \\
\bottomrule
\end{tabular}
\label{tab:Notation}
\end{table}
\FloatBarrier


\section{Base-case pseudocode}
\label{pseudocode}

This pseudocode contains the base case for two-threshold entropy delegation with simulated Human arbitration. 

\vspace{0.1cm}

\begin{algorithm}[H]
\footnotesize
\caption{Scenario 1: Simulated human arbitration}
\label{alg:basecase-s1-singleH}
\DontPrintSemicolon
\SetAlgoLined
\SetKwInOut{Input}{Input}\SetKwInOut{Output}{Output}

\Input{State space $S=\{A,B\}$; action set $\mathcal{A}=\{a_0,a_1\}$; horizon $H$; thresholds $0\le \alpha_{\mathrm{agree}}<\alpha_{\mathrm{disagree}}\le 1$;
EWMA rate $\beta$; discount $\gamma\in[0,1)$; step schedule $\{\eta_T\}$ (constant $\eta$: Markov-chain interpretation; $\eta_{\mathrm{decay}}$: Robbins--Monro);Markov transition kernel $P(\cdot\mid s,a)$; simulation flag \texttt{random\_states}; exogenous state probability $p_A$ (used when \texttt{random\_states=True}).}

\Output{Time series $D_{JS}^T$ and $f_H(T)$; policy components $p_H(a_0\mid A,T),p_H(a_0\mid B,T),p_{AI}(a_0\mid A,T),p_{AI}(a_0\mid B,T)$; EWMA utility traces $\hat U_H,\hat U_{AI}$; realised coalition utility $\hat U_{\mathrm{coal}}(T)$; value estimates $\hat V(\cdot)$; realised coalition value $\hat V_{\mathrm{coal}}(T)$; terminal policies.}

\BlankLine
\textbf{Init:} initialize policies \(p_H(\cdot \mid s,0)\), \(p_{AI}(\cdot \mid s,0)\) for each \(s \in S\); initialize utility traces \(\hat U_H(s,a) \leftarrow 0\) and \(\hat U_{AI}(s,a) \leftarrow 0\) for all \((s,a)\); initialize value function \(V(s) \leftarrow 0\) for each \(s \in S\); set counter \(C_H \leftarrow 0\).\;

\end{algorithm}

\begin{algorithm}[H]
\footnotesize
\caption*{Scenario 1: Simulated human arbitration (continued)}
\DontPrintSemicolon
\SetAlgoLined

\For{$T=1,2,\dots,H$}{
    Observe current state $s_T$.\;

    \tcp{Proposals from current policies. $p_{AI}$ and $p_H$ change T by T-1 to make the calculations since they already depend on $T-1$ see Eq (\ref{empirical_policies}).}
    Sample $a_H(T)\sim p_H(\cdot\mid s_T,T-1)$;\;
    Sample $a_{AI,T}\sim p_{AI}(\cdot\mid s_T,T-1)$;\;

    \tcp{Policy divergence at the current state (Jensen--Shannon)}
    Compute $D_{JS}^T\leftarrow D_{JS}\!\big(p_H(\cdot\mid s_T,T-1),\,p_{AI}(\cdot\mid s_T,T-1)\big)$.\;

    \BlankLine
    \tcc{Delegation/execution via two-threshold rule + simulated contextual rule)}
    \eIf{$D_{JS}^T\le \alpha_{\mathrm{agree}}$}{
        $\lambda_T(s) \leftarrow 0$;\tcp*{AI executes}
        $a_T^\ast \leftarrow a_{AI}(T)$;\;
    }{
        \eIf{$D_{JS}^T\ge \alpha_{\mathrm{disagree}}$}{
            $\lambda_T(s) \leftarrow 1$;\tcp*{Human executes}
            $a_T^\ast \leftarrow a_H(T)$;\;
        }{
            \tcc{Contextual zone: simulated arbitration (Equation~\eqref{lambda_context})}
            Draw $\kappa_T\sim \mathrm{Uniform}(0,1)$;\;
            $\lambda_T(s)\leftarrow 1$ if $\kappa_T<D_{JS}^T$, else $\lambda_T(s)\leftarrow 0$;\;
            $a_T^\ast \leftarrow a_H(T)$ if $\lambda_T(s)=1$, else $a_T^\ast \leftarrow a_{AI}(T)$;\;
        }
    }

    $C_H \leftarrow C_H + \lambda_T(s)$;\;
    $f_H(T)\leftarrow C_H/T$;\;

    \BlankLine
    \tcp{EWMA utility update at the executed pair For each $i\in\{H,AI\}$ and each $(s,a)\in S\times\mathcal A$:}
     $\hat U_i^{T+1}(s,a)=(1-\beta)\hat U_i^T(s,a)+
\beta\,\mathbf 1\{i=o_T \land (s,a)=(s_T,a_T^\ast)\}$.

    \BlankLine
    \tcp{Regime-conditioned immediate reward uses realized $\lambda_T(s)$}
    Define for any $a\in\mathcal{A}$:\;
    $r_T(s_T,a)\leftarrow \mathbf{1}\{\lambda_T(s)=1\}\hat U_H(s_T,a)+\mathbf{1}\{\lambda_T(s)=0\}\hat U_{AI}(s_T,a)$;\;

    \tcp{Virtual Nature kernel used in the Bellman expectation}
    \eIf{\texttt{random\_states} = True}{
        $P_{\mathcal{V}}(A\mid s,a)\leftarrow p_A$ and $P_{\mathcal{V}}(B\mid s,a)\leftarrow 1-p_A$ for all $(s,a)$;\;
    }{
        $P_{\mathcal{V}}(\cdot\mid s,a)\leftarrow P(\cdot\mid s,a)$ for all $(s,a)$;\;
    }

    \tcp{One-step Bellman optimality backup at the current state ($\lambda_T(s)$ as fixed at time $T$)}
    $\hat V(s_T)\leftarrow \max\limits_{a\in\mathcal{A}}
    \left\{ r_T(s_T,a) + \gamma \sum\limits_{s_{T+1}^{\mathrm{possible}}\in S}
    P_{\mathcal{V}}(s_{T+1}^{\mathrm{possible}}\mid s_T,a)\,
    \hat V(s_{T+1}^{\mathrm{possible}}) \right\}$;\;

    \tcp{In terms of calculations and states defined above, we write:}
    $\hat V(s_T)=\max_{a \in A_{\text{exec}}(s_T)}\left\{r(s_T,a;\lambda_T(s_T))
     + \gamma\left[P(A \mid s_T,a)\hat V(A) + P(B \mid s_T,a)\hat V(B)\right]\right\}$

     \tcp{Realised coalition utility at time $T$}
    $\hat U_{\mathrm{coal}}(T)\leftarrow
    \mathbf{1}\{\lambda_T(s)=1\}\hat U_H(s_T,a_T^\ast)+
    \mathbf{1}\{\lambda_T(s)=0\}\hat U_{AI}(s_T,a_T^\ast)$;\;
    
    \tcp{Realised coalition value at time $T$}
    $\hat V_{\mathrm{coal}}(T)\leftarrow \hat V(s_T)$;\;
    \BlankLine
    \tcc{Policy update: constant-step (Markov) or Robbins--Monro via decay}
    Set $\eta\leftarrow \eta_T$;\tcp*{e.g., $\eta_T=\eta$ (constant) or $\eta_T=\eta_{\mathrm{decay}}(T)$}
    Update $p_{AI}(\cdot\mid s_T,T)\leftarrow (1-\eta)p_{AI}(\cdot\mid s_T,T-1)+\eta\,\delta_{a_T^\ast}$;\;
    \If{$D_{JS}^T\le \alpha_{\mathrm{agree}}$}{
        Update $p_H(\cdot\mid s_T,T)\leftarrow (1-\eta)p_H(\cdot\mid s_T,T-1)+\eta\,\delta_{a_T^\ast}$;\;
    }

    \BlankLine
    \tcp{Environment transition (Virtual Nature)}
    \eIf{\texttt{random\_states} = True}{
        Sample $s_{T+1}=A$ with probability $p_A$, else $s_{T+1}=B$.\;
    }{
        Sample $s_{T+1}\sim P(\cdot\mid s_T,a_T^\ast)$.\;
    }
}
\end{algorithm}


\clearpage

\bibliographystyle{unsrt}
\bibliography{ref}

\end{document}